\newcommand{\black}{\black}
\declaretheorem[]{theorem}
\declaretheorem[sibling=theorem]{lemma}
\declaretheorem[sibling=theorem]{corollary}
\declaretheorem[sibling=theorem]{proposition}
\declaretheorem[numbered=yes]{remark}
\declaretheorem[sibling=theorem]{definition}
\numberwithin{equation}{section}
\newcommand{\bb}[1]{\mathbb{#1}}
\newcommand{\blank}[1]{}
\newcommand{\R}{\bb R}
\newcommand{\Z}{\bb Z}
\newcommand{\N}{\bb N}
\renewcommand{\P}{\bb P}
\newcommand{\abb}[1]{\left| {#1} \right|}
\newcommand{\bc}[1]{\left\{ {#1} \right\}}
\newcommand{\pa}[1]{\left( {#1} \right)}
\newcommand{\al}[0]{\alpha}
\newcommand{\be}[0]{\beta}
\newcommand{\ga}[0]{\gamma}
\newcommand{\de}[0]{\delta}
\newcommand{\De}[0]{\Delta}
\newcommand{\ep}[0]{\epsilon}
\newcommand{\La}[0]{\Lambda}
\newcommand{\Te}[0]{\Theta}
\newcommand{\Om}[0]{\Omega}
\newcommand{\si}[0]{\sigma}
\newcommand{\rc}[1]{\frac{1}{#1}}
\newcommand{\prc}[1]{\pa{\rc{#1}}}
\newcommand{\fc}[2]{\frac{#1}{#2}}
\newcommand{\sfc}[2]{\sqrt{\frac{#1}{#2}}}
\newcommand{\pf}[2]{\pa{\frac{#1}{#2}}}%Shortcut for fraction with parentheses
\newcommand{\sub}[0]{\subset}
\newcommand{\bs}[0]{\backslash}
\newcommand{\vocab}[1]{\textbf{#1}}
\newcommand{\prodo}[2]{\prod_{#1=1}^{#2}}
\newcommand{\one}[0]{\mathbbm{1}}
\newcommand{\pl}[0]{\partial}
\newcommand{\iy}[0]{\infty}
\newcommand{\subeq}[0]{\subseteq}
\newcommand{\Var}[0]{\operatorname{Var}}
\newcommand{\lra}[0]{\leftrightarrow}
\newcommand{\ol}[1]{\overline{#1}}
\newcommand{\ce}[1]{\left\lceil {#1}\right\rceil}
\title{Approximation algorithms for the random-field Ising model}
\author{Tyler Helmuth\thanks{Durham University, Mathematical Sciences Department}, Holden Lee\thanks{Duke University, Department of Mathematics}, Will Perkins\thanks{University of Illinois at Chicago, Department of Mathematics, Statistics, and Computer Science. Supported in part by NSF grants DMS-1847451 and CCF-1934915.}, Mohan Ravichandran\thanks{Bogazici University, Department of Mathematics}, Qiang Wu\thanks{University of Illinois at Urbana-Champaign, Department of Mathematics.}}
\date{\today}
\begin{document}
\maketitle

\begin{abstract}
  Approximating the partition function of the ferromagnetic Ising
  model with general external fields is known to be \#BIS-hard in the
  worst case, even for bounded-degree graphs, and it is widely
  believed that no polynomial-time approximation scheme exists.  This
  motivates an average-case question: are there classes of instances
  for which polynomial-time
  approximation schemes exist?  We investigate this question for the
  random field Ising model on graphs with maximum degree
  $\Delta$.  We establish the existence of fully polynomial-time
  approximation schemes and samplers with high probability over the
  random fields 
  if the external fields  are IID Gaussians with variance
  larger than a constant depending only on the inverse temperature and
  $\Delta$.  
  The main challenge comes from the positive density
  of vertices at which the external field is small. These regions,
  which may have connected components of size 
  $\Theta(\log n)$, are a barrier to algorithms based on establishing a
  zero-free region, and cause worst-case analyses of Glauber dynamics
  to fail.   
  The analysis of our algorithm is based on
  percolation on a self-avoiding walk tree. 
\end{abstract}

\section{Introduction}
\label{sec:introduction}

Recent years have seen the development of a rich interplay between statistical physics,
computational complexity, and algorithm design. One central question %in this area
is the extent to which phase transitions for discrete statistical
mechanics models are related to the tractability of associated
computational problems. In this paper we are primarily interested in
approximate counting and sampling;  
see Section~\ref{sec:defn} for formal definitions. 
Results concerning these tasks have
traditionally focused on (i) establishing algorithmic tractability in
so-called `high temperature' (weakly correlated) regimes, and
(ii) establishing the failure of certain algorithmic techniques in
`low temperature' (strongly correlated) regimes. Very recently,
positive algorithmic results have been
obtained at low temperatures~\cite{HelmuthPerkinsRegts,BarvinokRegts,huijben2021sampling},
and occasionally even at all
temperatures~\cite{jerrum1993polynomial,BCHPT,HJP,ALOVII}.

Many of these algorithmic results have 
been shown using 
a detailed probabilistic and physical understanding of the corresponding
statistical mechanics problems. The intuition gained from this
understanding is typically restricted to specific classes of
graphs, e.g., lattices or specific models of random graphs. One of the
challenges for algorithm design is to go beyond these restricted
classes of graphs, and this can lead to situations in which the
notions of `high temperature', `low temperature', and `phase
transition' are unclear. 

This loss of intuition is present 
when one asks about the \emph{average-case complexity} of a
well-known \#BIS-hard problem, 
the ferromagnetic Ising model
with general vertex-dependent external fields.  Towards understanding this situation, in this paper 
we  consider the  Ising model with
vertex-dependent \emph{random} external fields $h_{x}$, $x \in V(G)$,
where the $h_{x}$ are independent and identically distributed (IID) 
centered Gaussian.
This model, primarily studied in statistical physics on the integer
lattice $\mathbb Z^d$, is known as the \emph{random field Ising
  model}. A great deal of interest in the random field Ising model has arisen because it behaves differently than the zero-field model with $h_x\equiv0$. To briefly describe this, recall that the zero-field model undergoes a phase transition on $\Z^d$ when $d\geq 2$: if $\beta>0$ is small, then correlations decay exponentially and there is a unique infinite volume Gibbs measure. On the other hand, if $\beta$ is large, then correlations do not decay, and multiple Gibbs measures exist. The surprising phenomena is that this picture changes for the random field Ising model: on $\Z^d$ for $d\geq 3$ there is still a phase transition if the variance of the random fields $h_x$ is not too large~\cite{BricmontKupiainen}, but on $\Z^2$ there is no phase transition if the variance is non-zero~\cite{AizenmanWehr}. In fact, in  recent breakthroughs, it was shown that on $\Z^2$ correlations always decay exponentially~\cite{DingXia}, and on $\Z^d$, $d \ge 3$, correlations decay exponentially throughout the high-temperature regime~\cite{ding2021new}. It is natural to wonder if there are algorithmic counterparts to these physical phenomena.

While the phase transition phenomena of the previous paragraph concerned small variances, the random field Ising model also exhibits interesting properties (so-called Griffiths singularities) from the point of view of physics when the variance of the external fields is not small.
See Section~\ref{sec:background}. This regime is also \emph{terra incognita} from an algorithmic point of view, and we focus in this paper on the large variances. 

It is straightforward, see Section~\ref{sec:largefield}, to design
algorithms for the random field Ising model if one
assumes $|h_{x}|$ is \emph{uniformly} large (depending on the inverse
temperature $\beta$ and on the maximum degree of the graphs being
considered): these large external fields put the system in an
effectively high temperature situation. If, however, $|h_{x}|$ can be
small for some vertices $x$, then 
for large inverse temperature $\beta$, highly correlated
subsets of spins may appear --- there can be large `low temperature' islands in a `high
temperature' sea. Our main result shows that if $|h_{x}|$ is
  \emph{typically} large 
then for typical realizations of the external fields, the computational
tasks of approximate counting and sampling are tractable. We will
discuss our proof strategy and highlight the barriers faced by other
methods after pausing to give precise formulations of our results.

\begin{definition}
Let $G=(V,E)$ be a finite graph, $h\colon V\to \R$, and
$\be\in \R$. The \vocab{Ising model} on $G$ with inverse temperature $\beta$ and external fields $h$ is the probability distribution on $\{\pm 1 \}^V$ given by
\begin{equation}
    \label{e:Idef}
    p_{G,\beta,h}(\sigma) = \frac{e^{-H_{\beta,h}(\sigma)}}{Z_{G,\beta,h}}, \quad
    Z_{G,\beta,h} = \sum_{\sigma\in \{\pm 1\}^V} e^{-H_{\beta,h}(\sigma)},
\end{equation}
where the \vocab{Hamiltonian} $H_{\beta,h}$ is the function
\begin{equation*}
    -H_{\beta,h}(\sigma) = \beta\sum_{xy\in E}\sigma_x \sigma_y + \sum_{x\in V} h_x\sigma_x.
\end{equation*}
\end{definition}
A \vocab{random field Ising model} has external fields $h$ that are independent random variables with  prescribed distributions. We use $\P$ to denote the law of the random external fields.  In this paper we will be concerned with random
external fields that are  typically large, which we will model by  centered Gaussians with large variance. Our main theorem in this setting 
is the following. 
\begin{theorem}
  \label{thm:main}
  For every $\Delta \ge 2$, $\beta \in \R$, there exists $H = H(\De,\be)$
  large enough so that for random field Ising models with inverse
  temperature $\beta$ and external 
  fields distributed as independent
  $\mathcal{N}(0,H)$ random variables the following holds.  For every graph $G $ of
  maximum degree $\Delta$ on $n$ vertices, 
  with probability $1-o(1)$ over the choice of
  random fields, there exists an FPTAS for $Z_{G,\beta, h}$ and
  a polynomial-time sampling scheme.
\end{theorem}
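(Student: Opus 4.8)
The plan is to leverage the fact that a random field from $\mathcal N(0,H)$ with $H$ large makes $|h_x|$ small only on a sparse random subset of vertices, so that the "low-temperature islands" are small connected components with high probability, and then to run a known algorithm (e.g.\ a tree-recursion/correlation-decay scheme, or Glauber dynamics on the conditional measure) on each island while treating the complement as an effectively high-temperature system. Concretely, fix a threshold $\tau=\tau(\De,\be)$ and call a vertex $x$ \emph{bad} if $|h_x|<\tau$. For $H$ large, $\P(|h_x|<\tau)=\Theta(\tau/\sqrt H)$ is as small as we like, so by a standard union bound over connected subgraphs (there are at most $n\cdot(e\De)^k$ connected subsets of size $k$ containing a fixed vertex) the bad set $B$ has all connected components of size $O(\log n)$ with probability $1-o(1)$; moreover $B$ is "subcritical" in the sense that percolation-type estimates on the self-avoiding walk tree will converge. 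This is the event on which we work.

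Next I would set up the algorithmic decomposition. On the good set $V\setminus B$ the external fields are uniformly at least $\tau$ in absolute value, so by the argument sketched in Section~\ref{sec:largefield} the model restricted there (with arbitrary boundary conditions on $\partial B$) is in an effectively high-temperature regime: strong spatial mixing / a zero-free region / fast mixing of Glauber dynamics holds with constants depending only on $\De,\be$. The partition function $Z_{G,\beta,h}$ can then be expanded by summing over spin configurations on $B$ (or on a slightly enlarged buffer around $B$), with each term a product over the good components of partition functions that are efficiently approximable; since each bad component has size $O(\log n)$, the number of spin configurations on any one component is $\mathrm{poly}(n)$, and one stitches the pieces together. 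For sampling one does the analogous thing: sample the good region conditionally (fast mixing), sample each small bad island by brute force or by its own local chain, and correct the interface.

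The quantitative heart of the argument — and the step I expect to be the main obstacle — is proving correlation decay uniformly over the random environment on the bad event, which the excerpt indicates is done via percolation on the self-avoiding walk tree (the Weitz tree). The issue is that the naive worst-case decay rate degrades on the bad vertices, so one cannot simply invoke a global contraction; instead one must show that the tree-recursion contracts \emph{on average} along the SAW tree, using that a path of length $\ell$ in the SAW tree meets only $O(\ell/\sqrt H)$ bad vertices in expectation, and that the "weak" contraction at bad vertices is more than compensated by the strong contraction at the good majority. Formally this becomes a statement that a certain multiplicative weight survives to depth $\ell$ with probability exponentially small in $\ell$ — a first/second-moment or generating-function computation on a branching-process-like object — which then yields exponential decay of the influence of the boundary and hence the desired error bounds for the truncated tree recursion. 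Combining this decay estimate with the decomposition above gives an FPTAS and, through the standard counting-to-sampling reduction (or directly via the analogous chain analysis), a polynomial-time sampler, on an event of probability $1-o(1)$.
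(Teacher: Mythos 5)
There is a genuine gap, and it sits in both halves of your plan. The decomposition in your first two paragraphs does not yield a polynomial-time algorithm as described: for $H$ depending only on $\Delta,\beta$, the bad set $B=\{x:|h_x|<\tau\}$ has a constant \emph{density} of vertices, so $|B|=\Theta(n)$ with high probability even though each connected component of $B$ is $O(\log n)$. Your proposed expansion ``sum over spin configurations on $B$, each term a product over good components'' therefore has $2^{\Theta(n)}$ terms; the per-component count being $\mathrm{poly}(n)$ does not help, because the $\Theta(n/\log n)$ islands are coupled through the good sea and the sum over $\sigma_B$ does not factorize. Making the ``stitching'' rigorous would require showing that island--island interactions mediated by the sea are summable (e.g.\ a polymer/cluster-expansion argument), which is a substantial and undeveloped piece of work --- and, as the paper notes, Griffiths-singularity phenomena are exactly the obstruction to zero-freeness/cluster-expansion routes here. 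The sampling plan has the same circularity: ``sample the good region conditionally'' requires the marginal of the good region with the islands integrated out, which is no longer a uniformly-large-field model, while sampling the islands first requires integrating out the sea; and running Glauber on the whole graph is precisely what Remark~\ref{rem:Glauber} flags as failing under worst-case (path-coupling) analysis because there is no contraction at bad vertices.

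Your third paragraph correctly identifies the actual heart of the matter, but you leave it as ``the main obstacle'' with only a heuristic (expected $O(\ell/\sqrt H)$ bad vertices per path, a first/second-moment or generating-function computation), so the key lemma is not proved. The paper needs no island decomposition at all: it works directly on the whole graph via Weitz's SAW tree, and proves strong spatial mixing there not by analyzing contraction of the recursion ``on average'' but by disagreement percolation (Lemma~\ref{l:bound-by-perc}), bounding the influence of a far boundary by the probability that a site-percolation process with $p_x=M(\deg(x),h_x,\beta)$ connects it to the root. A Chernoff--Hoeffding bound shows that along every root-to-depth-$\ell$ path in the SAW tree at least half the (distinct, hence independent) fields exceed $h_0$, giving a path weight at most $(e^{-2c_1}/\Delta^2)^{\ell/2}$, and a union bound over the at most $\Delta^{\ell}$ paths and over the $n$ roots yields SSM with rate $e^{-c_1\ell}$ but only at min-distance $\ell_0=\Theta(\log n)$ --- a weakening you do not address, though it is exactly compatible with truncating the SAW tree at depth $O(\log(n/\epsilon))$. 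To repair your write-up you should drop the island decomposition, prove the percolation/Chernoff estimate on the SAW tree (Lemma~\ref{l:tree-mix} and Corollary~\ref{cor:ssmdSAW}), and then invoke the standard Weitz telescoping-product and sequential-sampling reductions, which is the route the paper takes.
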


\begin{remark}
  The failure probability (over the randomness of the fields) tends to
  $0$ with the size of the graph $n$.  We can make this failure
  probability arbitrarily small: to achieve failure probability
  $\delta$ requires a factor polynomial in $1/\delta$ in the running
  time of the algorithm.
\end{remark}

\begin{remark}
  The external fields being Gaussian does not play a role in our proof. Theorem~\ref{thm:main} applies more generally to independent 
  external fields with 
  distributions with the property that 
\begin{equation*}
  \label{e:hyp2}
  \P(|h_x|< c)\le p. 
\end{equation*}
for 
  $c =|\be| \De + \log \De + c_1$ and
  $p = \fc{c_2}{\De^2}$
  for large enough constant $c_1$ and small enough constant $c_2$. 
In particular, since $\P_{h\sim N(0,\si^2)} (|h|\le c) \le \sfc{2}{\pi} \fc{c}{\si}$, we see that $H=\Om(\be^2\De^6)$ 
suffices in the case of Gaussian external fields and $\be$ bounded away from 0. 
\end{remark}

\begin{remark}
  \label{rem:check}
  We can efficiently check whether a given instance of the 
  external fields satisfy the required conditions in the following
  sense: given $\epsilon>0$ it takes time polynomial in $n$ and
  $1/\epsilon$ to both output an approximation of the partition
  function and to check the conditions that guarantee the
  $\epsilon$-relative accuracy of the approximation.  See
  Proposition~\ref{prop:check}. We emphasize however, that with probability
  $1-o(1)$, a single instance satisfies these conditions for all
  choices of $\epsilon$.
\end{remark}

\begin{remark}
  Theorem~\ref{thm:main} extends in a straightforward manner to the
  setting of edge-specific inverse temperatures $\be_{xy}$ provided
  all $\beta_{xy}$ are bounded in absolute value by a fixed
  $\be>0$. We consider a single inverse temperature for notational
  simplicity.
\end{remark}

\begin{remark}
  Theorem~\ref{thm:main} also applies in the presence of boundary conditions, which arise naturally in our analysis. We define boundary conditions formally in Section~\ref{sec:defn}. 
\end{remark}

The key mechanism behind the proof of Theorem~\ref{thm:main} is that
large external fields cause the system to rapidly decorrelate, as
spins tend to align with their external field. We formalize this
decorrelation by generalizing a disagreement percolation argument due
to Camia, Jiang, and Newman~\cite{camia2018note}. The resulting notion
of correlation decay is similar to, but somewhat weaker than, strong
spatial mixing. See Section~\ref{sec:ssmtree}. This correlation decay
property is sufficiently strong to enable a recursive analysis on a
self-avoiding walk (SAW) tree as was pioneered by
Weitz~\cite{weitz2006counting}. Weitz's method for approximately counting weighted independent sets in bounded degree graphs is now known as the `method of correlation decay', and has found numerous applications in approximate counting and sampling and in proving Gibbs uniqueness for spin models on $\mathbb Z^d$, e.g,~\cite{bayati2007simple,sinclair2014approximation,li2013correlation,restrepo2013improved,sinclair2017spatial}.  The analysis of correlation decay algorithms involves proving strong spatial mixing on the SAW tree, usually by means of a contraction argument or a monotonicity argument with respect to boundary conditions.  What is new in our approach is proving a form of strong spatial mixing on the appropriate SAW tree by a probabilistic argument based on percolation theory (more precisely, based on disagreement percolation~\cite{van1993uniqueness,van1994percolation}).

The proof of Theorem~\ref{thm:main} is fairly robust and can be generalized to apply to graphs where the maximum degree is not necessarily bounded. To illustrate this, we use similar ideas to
analyze the random field Ising model on sparse Erd\H{o}s-R\'enyi random graphs. Recall that a graph drawn from $\mathcal G(n,p)$ (an \vocab{Erd\H{o}s-R\'enyi random graph}) is defined as a graph on $n$ vertices $\{v_1,\ldots, v_n\}$ where each potential edge 
%pair of vertices 
$\{v_i,v_j\}$, $i\ne j$ independently included %is an edge 
with probability $p$.

\begin{theorem}\label{thm:main2}
For every $\De >1$ and $\beta \in \R$, there exists $H$ large enough so that the following holds.  With $G\sim \mathcal G(n,\De/n)$ and independent random external fields distributed as $\mathcal{N}(0,H)$, with probability $1-o(1)$ over the random graph and random fields there exists an FPTAS and polynomial-time sampling scheme for the random field Ising model on $G$ at inverse temperature $\beta$. 
\end{theorem}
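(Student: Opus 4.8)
The plan is to adapt the bounded-degree argument behind Theorem~\ref{thm:main} to the sparse Erd\H{o}s--R\'enyi setting, where the max degree is no longer bounded but is $O(\log n/\log\log n)$ with high probability, and where most of the graph is locally tree-like. First I would record the standard structural facts about $G\sim\mathcal G(n,\Delta/n)$: with probability $1-o(1)$ the maximum degree is $O(\log n/\log\log n)$, every vertex has only $O(\log n)$ vertices within any constant radius, and there are no more than a constant number of short cycles passing through any vertex; in particular, for each vertex the ball of radius $c\log n$ (for suitable small $c$) contains at most polynomially many vertices and only $O(1)$ cycles. These facts mean the self-avoiding walk tree rooted at a vertex $v$, truncated at depth $\ell=\Theta(\log n)$, has size $n^{O(1)}$, so the SAW-tree recursion is still computable in polynomial time; the exponential branching of the SAW tree is tamed because the neighborhood sizes are $O(\log n)$, not because the degree is bounded by a constant.

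Next I would re-run the disagreement-percolation / correlation-decay argument of Section~\ref{sec:ssmtree} on this SAW tree. The key point is that the correlation-decay estimate established for Theorem~\ref{thm:main} is \emph{local}: it bounds the influence of a boundary at distance $\ell$ by the probability that a certain disagreement percolation process on the SAW tree survives to depth $\ell$, and this survival probability is controlled by a branching-process comparison whose offspring distribution depends only on $\beta$, on the field-smallness probability $p$, and on the local degrees. Since the field hypothesis $\P(|h_x|<c)\le p$ with $c=|\beta|\Delta+\log\Delta+c_1$ and $p=c_2/\Delta^2$ is exactly what is being assumed (now with $\Delta$ the \emph{average} degree), the branching process is subcritical provided $H$ is chosen large enough as a function of $\beta$ and $\Delta$; one then needs the additional input that in $\mathcal G(n,\Delta/n)$ the expected number of SAW-tree children of a given node, averaged appropriately, is close to $\Delta$, and that large-deviation fluctuations of the local degrees are absorbed by a slightly larger choice of $H$. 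Running the percolation process for $\ell=C\log n$ steps then makes the influence of the depth-$\ell$ boundary at most $n^{-2}$, say, with probability $1-o(1)$ over the fields (and the graph), uniformly over all vertices after a union bound over the $n$ vertices.

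With the truncated SAW-tree recursion computable in time $n^{O(1)}$ and provably accurate to within $n^{-2}$ per marginal, standard arguments convert this into an FPTAS for $Z_{G,\beta,h}$ (telescoping the partition function into a product of conditional marginals along a vertex ordering, each approximated via the SAW tree) and, via the same marginals, into a polynomial-time approximate sampler; here one also invokes Remark~\ref{rem:check}-type verifiability to handle the atypical fields. I expect the main obstacle to be controlling the SAW-tree branching uniformly: unlike the bounded-degree case, a vertex of unusually high degree $\Theta(\log n/\log\log n)$ could in principle make the local branching supercritical, so one must either (a) show such vertices are sufficiently rare and far apart that the disagreement percolation still dies out before reaching two of them, or (b) choose the smallness threshold $c$ and hence $H$ to dominate the \emph{worst-case} local degree $O(\log n/\log\log n)$ rather than the average degree~$\Delta$ — since $c$ enters only logarithmically in $H$, option~(b) costs only a $\mathrm{polylog}(n)$ factor in the required variance, which is still fine for a fixed graph size but would need to be stated carefully; reconciling this with a $\Delta$-only (not $n$-dependent) constant $H$ is the delicate part and is presumably where the ``fairly robust'' remark does real work, likely by exploiting that high-degree vertices are isolated enough that the percolation argument only ever sees them one at a time.
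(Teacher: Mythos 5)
Your overall architecture (SAW tree, disagreement percolation, truncation at depth $\Theta(\log n)$, telescoping the partition function into marginals) is the same as the paper's, but the one genuinely new ingredient needed for Theorem~\ref{thm:main2} is precisely the point you leave unresolved: how to cope with vertices of degree much larger than $\Delta$ while keeping $H$ a function of $(\Delta,\beta)$ only. This is a real gap, not a technicality. With the field threshold $h_0=h_0(\Delta,\beta)$ fixed, a vertex whose degree greatly exceeds $\Delta$ has $M(\deg(x),h_x,\beta)$ close to $1$ no matter how large its field is, so your assertion that the disagreement-percolation/branching comparison is ``subcritical provided $H$ is chosen large enough as a function of $\beta$ and $\Delta$'' fails at such vertices; large-deviation fluctuations of the degrees cannot be ``absorbed by a slightly larger choice of $H$.'' Your option (b) (letting the threshold dominate the worst-case degree $\Theta(\log n/\log\log n)$) forces $H$ to grow with $n$ and hence does not prove the theorem as stated, as you yourself note; your option (a) points in the right direction but you supply no argument for it.

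The paper's resolution (Lemma~\ref{l:ER}) is concrete and is the heart of Section~\ref{sec:extensions}: call a root-to-boundary path of length $\ell$ in the SAW tree \emph{bad} if at least a quarter of its vertices have degree exceeding $e^{c_2c_3}\Delta$ or at least a quarter have small fields; then union bound over all $n^\ell$ vertex sequences, using that a fixed sequence is actually a path with probability $(\Delta/n)^\ell$, a Chernoff--Hoeffding bound on edge counts for the high-degree event (degrees are not independent, so one counts edges between the candidate set and its complement), and the field Chernoff bound as in Lemma~\ref{l:tree-mix}. On a good path at least half the vertices contribute a factor $M\le e^{-c_1}/(c_4\Delta^2)$, and this is played off against the number of SAW-tree vertices at distance $\ell$, which is controlled not by heuristics about balls and $O(1)$ cycles (your claim that a radius-$c\log n$ ball contains only $O(1)$ cycles is false, and not what is needed) but by the connective-constant bound of Lemma~\ref{l:cc}, giving $\sum_{d\le \ell}|N(v,d)|\le (c_4\Delta)^{\ell/2}$; this same bound yields the polynomial running time of the truncated recursion. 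Note also that the field-smallness hypothesis required here is the stronger $\P(|h|<h_0)\le (2\Delta)^{-c_2}$ of the remark following Theorem~\ref{thm:main2}, not the $c_2/\Delta^2$ condition you quote from the bounded-degree case (harmless for Gaussian fields with $H$ large, but not ``exactly what is being assumed''). Without an argument of the bad-path type, your proposal does not yield the theorem with $H$ independent of $n$.
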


\begin{remark}
Theorem~\ref{thm:main2} applies more generally to independent external fields with distributions with the property that 
\begin{equation*}
  \label{e:hyp}
  \P(|h_x|< c)\le p. 
\end{equation*}
for 
  $c = c_1(|\be| \De + \log \De + 1)$ and
  $p = \prc{2\De}^{c_2}$
  for constants $c_1,c_2$. 
\end{remark}

\subsection{Background}
\label{sec:background}

An important challenge for understanding the relative complexity of approximate
counting was raised by Dyer, Goldberg, Greenhill, and Jerrum in~\cite{dyer2004relative}: how difficult is it to
approximately count independent sets in \emph{bipartite} graphs? This
problem, \vocab{(approximate) \#BIS}, occupies a central place in the analysis
of approximate counting algorithms~\cite{GoldbergJerrum,galanis2016ferromagnetic,liu2014complexity}. 
The most relevant fact for us is the \#BIS-hardness of
approximately computing the partition function of the ferromagnetic
Ising model with general vertex-dependent external
fields on bounded-degree bipartite graphs~\cite{cai2016hardness}.

For \#BIS-hardness, 
allowing general external fields is
necessary: 
it is a classic result 
that there are efficient
approximation algorithms for the ferromagnetic Ising model with no
external fields or with consistent external fields (all non-negative or all non-positive) for all values of
$\beta\geq 0$ of the inverse temperature~\cite{jerrum1993polynomial,GoldbergJerrum}. The novelty of
Theorem~\ref{thm:main} is
that it allows for inconsistent
external fields. Standard approaches for analyzing the Glauber
dynamics of the Ising model do not seem capable of proving
Theorem~\ref{thm:main} --- see Remark~\ref{rem:Glauber} below. Note that
if $\be$ is taken sufficiently small, then standard high
temperature methods already apply~\cite{zhang2011approximating}, and
thus the most interesting case of our theorem is when $\be$ is
large.

Another approach to approximation is based on zero-freeness of the
partition function, either via Barvinok's method~\cite{Barvinok} or
cluster expansion methods~\cite{HelmuthPerkinsRegts}. 
The main barrier to applying these methods in the case of the random field Ising model is the phenomena of Griffiths 
singularities~\cite{vanEnter}. 
These singularities arise in spin
systems with random Hamiltonians; 
in our case the randomness is contained in the external field.
When the underlying graph is the integer lattice $\mathbb Z^d$, the existence of rare (but
arbitrarily large) regions of atypical behaviour for the random field is
widely believed to lead to thermodynamic functions being infinitely
differentiable but \emph{not}
analytic~\cite{von1995taming,frohlich1984improved,vanEnter}. The
non-analyticity of limiting quantities rules out the existence of
zero-free regions in finite volumes.

\subsection{Future Directions}
\label{sec:disc-future-direct}
Our algorithm is based on Weitz's method of correlation decay on a
computational tree. A natural question is whether Markov chain-based
algorithms 
can provide a similar guarantee. See Remark~\ref{rem:Glauber} for indications this may be a subtle question. Our thresholds are certainly improvable, and the
tractability for more moderate values of external field is
unknown, as is tractability in the presence of correlated external fields. 

The difference in behaviour for the RFIM in $d=2$ and
$d\geq 3$ suggests that the design of approximate counting algorithms
in the presence of weak disorder is a subtle task, and hence an interesting challenge for future research.
Another interesting direction is to develop algorithms for problems that contain
`high temperature' islands in a `low temperature' sea, i.e., with the
roles of high and low temperature in the present paper exchanged. 
Our result does not rely on ferromagnetism, and it is a good
question whether one can obtain a stronger result---such as one that
works for more moderate external fields --- in the ferromagnetic $\beta> 0$ regime. 

We end this section by indicating a motivating connection (and potential future direction) between the results of this paper and
\#BIS that does not pass through any formal reductions as in~\cite{cai2016hardness,GoldbergJerrum}. A
difficulty in investigating the complexity of \#BIS is that it is unclear 
which instances are hard.  For a single random bipartite graph
(balanced or not), the low temperature behaviour of independent sets
is well understood (see, e.g.,~\cite{mossel2009hardness}):  independent sets typically consist of significantly more 
vertices on one side of the bipartition than the other. Thus, in
the search for a hard instance one may be tempted to treat single random
bipartite graphs as gadgets, and to assemble many gadgets together by
adding edges between the gadgets in a bipartite manner. If the density
of added edges is low enough to avoid disrupting the behaviour of
individual gadgets, then in the low temperature regime
the resulting graph heuristically behaves like a ferromagnetic Ising
model. The external field reflects if the constituent graphs are
balanced ($h=0$) or not ($h\neq 0$). Working directly with the Ising
model with an inconsistent magnetic field allows for us to search for
hard instances while bypassing the technicalities that would be
present in making the preceding discussion precise.

\subsection{Organization of the paper}
In Section~\ref{sec:largefield}, we give approximate sampling and counting algorithms in the case that all external fields are large with probability one.
In Section~\ref{sec:ssmtree}, we prove our main theorem (Theorem~\ref{thm:main}). In Section~\ref{sec:extensions}, we prove Theorem~\ref{thm:main2}, the extension of our main result to random graphs. Finally, in Section~\ref{s:non-unif}, we show that our work generalizes one of the main theorems of~\cite{camia2018note} to infinite graphs of max-degree $\Delta$.
In Appendix~\ref{a:saw}, we give details of the SAW tree construction and recursion used by the algorithms, and in Appendix~\ref{a:alg}, we write out the algorithms explicitly.

\subsection{Preliminaries and notation}
\label{sec:defn}

\paragraph{Approximate counting and sampling.} 

A \vocab{fully polynomial-time approximation scheme (FPTAS)} for a function $Z(G)$ is a deterministic algorithm that given a graph $G$ and a tolerance $\epsilon>0$ outputs a number $\hat Z$ such that $e^{-\epsilon}\hat Z \leq Z\leq e^{\epsilon}Z$, with running time polynomial in $1/\epsilon$ and $|V(G)|$. A \vocab{polynomial time sampling scheme} for a distribution $\mu_G$ is a randomized algorithm that, given $G$ and a tolerance $\epsilon>0$ outputs a sample from a distribution $\hat\mu$ such that $\|\hat\mu - \mu_G\|_{\textrm{TV}}\leq \epsilon$, with running time polynomial in $1/\epsilon$ and $|V(G)|$. 

\paragraph{Notation.} Throughout we implicitly restrict our attention to
  connected graphs, as all of the algorithmic tasks we consider factor
  over connected components. We let $\mathcal G_\De$ denote the set of graphs with maximum degree at most $\De$, and write $\deg(v)$ for the degree of a vertex $v$. Given a graph $G=(V,E)$, we denote by $d(v,w)$ the distance between vertices $v,w$ on the graph, i.e., the length $\ell$ of the shortest  path $v_0,\ldots, v_\ell$ with $v_0=v$ and $v_\ell=w$, and $(v_i,v_{i+1}) \in E$.
For a set $S\sub V$, let $d(v,S):= \min_{w\in S} d(v,w)$. For a vertex $v$ let $N(v)$ denote the set of neighbors of $v$. Let $N(v,\ell)$ denote the set of vertices at distance exactly $\ell$ in  a graph $G$. 

The \vocab{Ising model with boundary condition $\tau\in \{\pm 1\}^V$ on $B\subset V$} is defined by the formulas in~\eqref{e:Idef} but with the restriction that $\sigma$ is a spin configuration that agrees with $\tau$ on $B$. We write $p^{\tau}_{G,\be,h}$ for the law of this model.
Given an Ising model with fixed $\tau, \beta, h$, and letting $\si'\in \{\pm 1\}^\La$ for a subset $\La\sub V$, let $p_v^{\si'}$ be the marginal probability of spin 1 at vertex $v$ conditioned on $\si_\La = \si'$, i.e.,
\begin{equation}
\label{e:isingmarginal}
p_v^{\si'}:= p^{\tau}_{G, \be, h}(\si_v=1|\si_{\La} = \si').
\end{equation}
In~\eqref{e:isingmarginal} and above we have written $\sigma_A = (\sigma_x)_{x\in A}$ to denote the spins at the vertices $A\subset\Lambda$.

For functions $f,g \colon \R\to\R$ we write $f=O(g)$ if there exists $C>0$ such that $|f(x)|\leq C |g(x)|$ for all $x$ large enough, and $f=\Omega(g)$ if $g=O(f)$.

\section{Deterministic large external fields}
\label{sec:largefield}
In this section we give approximate sampling and counting algorithms
in the case that $|h_x| \ge c(\beta,d)$ for all $x \in V$.
 This case in which \emph{all} external fields are large with probability $1$ provides some intuition for the main result by indicating how the
presence of large fields facilitates correlation decay. However, the
simple proof we provide here 
does not work without a uniform bound on the external fields, 
see Remark~\ref{rem:Glauber} below.  Define 
\begin{equation}
  \label{eq:M}
M(\De, h, \be) =\abb{ \rc{1+e^{-2\be \De - 2h}} - \rc{1+e^{2\be \De - 2h}}}.
\end{equation}
The quantity $M$, and particularly upper bounds on $M$, will be important for our
analysis in this and subsequent sections.
\begin{lemma}\label{l:M}
For any $\be \in \R$,  $\De\ge 0$, and $\ep>0$, if $|h|\ge |\be| \De + \rc 2 \log\prc \ep$, then $M(\De, h, \be)< \ep$.
\end{lemma}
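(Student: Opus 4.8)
The plan is to view $M$ as the difference of two values of the logistic function $f(s)\bydef 1/(1+e^{-s})$ and to exploit that $f$ approaches its limits $0,1$ at an exponential rate in the tails.

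First I would record the algebraic identity
\[
  M(\De,h,\be)=\abs{\,f(2h+2\be\De)-f(2h-2\be\De)\,},
\]
which follows immediately from \eqref{eq:M} by writing each of the two terms there in the form $1/(1+e^{-s})$. Since $M$ is manifestly invariant under $\be\mapsto-\be$ (this only interchanges the two terms) and, via $f(-s)=1-f(s)$, also invariant under $h\mapsto-h$, while the hypothesis only involves $\abs{h}$ and $\abs{\be}$, it suffices to treat the case $h\ge 0$ and $\be\ge 0$. I would also separate off the degenerate cases: if $\De=0$ then $M=0$, and if $\ep\ge 1$ then $M<1\le\ep$; so I may assume $\De>0$ and $\ep<1$, in which case the hypothesis reads $2h-2\be\De\ge\log(1/\ep)>0$.

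With $h,\be\ge 0$ the argument $2h+2\be\De$ exceeds $2h-2\be\De$, so monotonicity of $f$ gives $M=f(2h+2\be\De)-f(2h-2\be\De)$. Bounding the first term by $1$ and using $1-f(s)=1/(1+e^{s})<e^{-s}$, I obtain
\[
  M \;\le\; 1-f(2h-2\be\De) \;=\; \frac{1}{1+e^{\,2h-2\be\De}} \;<\; e^{-(2h-2\be\De)} \;\le\; e^{-\log(1/\ep)} \;=\; \ep,
\]
which is the claim.

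The one conceptual point to get right is that the crude global Lipschitz estimate $M\le \tfrac14\abs{4\be\De}=\abs{\be\De}$ is useless here, so one must instead use that $f$ is exponentially flat in its tails --- which is precisely what the large-field hypothesis $\abs{h}\ge\abs{\be}\De+\tfrac12\log(1/\ep)$ delivers by pushing both arguments $2h\pm 2\be\De$ far from $0$. Beyond that, the only things requiring care are the sign bookkeeping in the symmetry reduction (checking the hypothesis survives and that $\De=0$, $\ep\ge1$ are handled) and the elementary inequality $1/(1+e^{s})<e^{-s}$; I do not anticipate any genuine obstacle.
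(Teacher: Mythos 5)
Your proof is correct and takes essentially the same route as the paper's: both arguments exploit that the hypothesis $|h|\ge|\be|\De+\tfrac12\log(1/\ep)$ pushes both terms of $M$ into a tail interval of length less than $\ep$. The only cosmetic differences are your symmetry reduction to $h,\be\ge 0$ followed by the bound $1/(1+e^{s})<e^{-s}$, whereas the paper treats the two signs of $h$ directly and squeezes both terms against $1/(1+\ep)$ (resp.\ $\ep/(1+\ep)$).
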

\begin{proof}
Consider the terms $\rc{1+e^{-2\be \De - 2h}}$ and $\rc{1+e^{2\be \De - 2h}}$. 
If $h\ge |\be| \De + \rc 2 \log\prc \ep$, then both terms are $\ge
\rc{1+\ep}$, and if $h\le -(|\be| \De + \rc 2 \log\prc \ep)$, then both
terms are $\le \rc{1+\rc{\ep}}= \fc{\ep}{1+\ep}$. Since both terms are
bounded between 0 and 1, $M(\De, h, \be)\le \fc{\ep}{1+\ep}<\ep$ follows.
\end{proof}

The following monotonicity property follows by differentiating in $\Delta$.
\begin{lemma}
\label{lem:Mmon}
 For fixed $h,\beta$, $M(\Delta,h,\beta)$ is non-decreasing in $\Delta$.
\end{lemma}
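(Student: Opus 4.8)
The plan is to drop the absolute value and differentiate, exactly as the statement hints. First I would note that $M$ is \emph{even} in $\beta$: sending $\beta \mapsto -\beta$ merely interchanges the two fractions inside $\abb{\cdot}$ in~\eqref{eq:M}, so $M(\De,h,\be) = M(\De,h,\abb{\be})$ and we may assume $\be \ge 0$ throughout. Next, introduce the logistic function $g(t) = \prc{1+e^{-t}}$, which is smooth and strictly increasing on $\R$ with $g'(t) = g(t)\pa{1-g(t)} > 0$ (since $g(t)\in(0,1)$). Rewriting each term of~\eqref{eq:M},
\begin{equation*}
\rc{1+e^{-2\be\De - 2h}} = g(2h + 2\be\De), \qquad \rc{1+e^{2\be\De - 2h}} = g(2h - 2\be\De),
\end{equation*}
so that $M(\De,h,\be) = \abb{g(2h+2\be\De) - g(2h-2\be\De)}$.

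Since $\be \ge 0$ and $\De \ge 0$ we have $2h + 2\be\De \ge 2h - 2\be\De$, and because $g$ is increasing this gives $g(2h+2\be\De) \ge g(2h-2\be\De)$; hence the absolute value can be removed and $M(\De,h,\be) = g(2h+2\be\De) - g(2h-2\be\De)$. Differentiating in $\De$ with $h,\be$ held fixed,
\begin{equation*}
\frac{\pl}{\pl \De}\, M(\De,h,\be) = 2\be\, g'(2h+2\be\De) + 2\be\, g'(2h-2\be\De) = 2\be\pa{g'(2h+2\be\De) + g'(2h-2\be\De)} \ge 0,
\end{equation*}
because $\be \ge 0$ and $g' > 0$ everywhere. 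Therefore $M(\De,h,\be)$ is non-decreasing in $\De$, which is the claim.

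There is no real obstacle here; the only mild subtlety is the presence of the absolute value, which I handle by the preliminary reduction to $\be \ge 0$ via evenness, after which $\abb{\cdot}$ disappears and the sign of the derivative is manifest. One could equally well avoid calculus and observe directly that $x \mapsto g(a+x) - g(a-x)$ is non-decreasing on $[0,\infty)$ for every $a \in \R$ (again because its derivative $g'(a+x)+g'(a-x)$ is positive), applied with $a = 2h$ and $x = 2\be\De$; but differentiation in $\De$ is the shortest route and matches the hint in the statement.
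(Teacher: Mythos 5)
Your proof is correct and follows exactly the route the paper intends — the paper simply asserts that the lemma "follows by differentiating in $\Delta$", and you carry this out, with the sensible extra care of reducing to $\be\ge 0$ by evenness so the absolute value can be dropped before differentiating.
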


The following bounds the influence of boundary conditions on the marginal probability at $v$. 
\begin{lemma}\label{l:marg-M}
Let $v\in V$, $\La\sub V$ be a set not containing $v$, and
let $\si_\La, \tau_\La\in \{\pm 1\}^{\La}$.  Then
\begin{equation*}
|p^{\si_{\La}}_v - p^{\tau_{\La}}_v| \le M(\deg(v), h_v, \be). %_{\max}).
\end{equation*}
\end{lemma}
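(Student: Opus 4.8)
The plan is to bound the influence of the boundary condition on the marginal at $v$ by isolating the direct effect of the spins at $N(v)$ and using the definition of $M$. The key observation is that $p_v^{\sigma'}$ depends on the conditioning $\sigma'$ only through how it constrains the spins on $N(v)$; once we fix the values of $\sigma$ on $N(v)$, the marginal of $\sigma_v$ is completely determined, since conditionally on its neighborhood $\sigma_v$ is independent of everything else (a consequence of the nearest-neighbor structure of the Hamiltonian).

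First I would write, for any configuration $\eta$ on $N(v)$,
\begin{equation*}
p^{\tau}_{G,\be,h}(\si_v = 1 \mid \si_{N(v)} = \eta) = \frac{1}{1 + e^{-2\be S(\eta) - 2h_v}}, \qquad S(\eta) = \sum_{w \in N(v)} \eta_w,
\end{equation*}
which follows immediately from the form of $H_{\be,h}$: the terms involving $\si_v$ are $\be \si_v S(\eta) + h_v \si_v$. Then, conditioning on $\si_\La = \si'$, by the law of total probability the marginal $p_v^{\si'}$ is an average of the above quantities over the conditional distribution of $\si_{N(v)}$ given $\si_\La = \si'$; hence $p_v^{\si'}$ lies between $\min_\eta \frac{1}{1+e^{-2\be S(\eta)-2h_v}}$ and $\max_\eta \frac{1}{1+e^{-2\be S(\eta)-2h_v}}$, where $\eta$ ranges over $\{\pm 1\}^{N(v)}$. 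Since $S(\eta) \in [-\deg(v), \deg(v)]$ and $x \mapsto \frac{1}{1+e^{-2\be x - 2h_v}}$ is monotone in $x$ (increasing if $\be>0$, decreasing if $\be<0$, constant if $\be=0$), these extremes are attained at $S(\eta) = \pm\deg(v)$. Thus both $p_v^{\si_\La}$ and $p_v^{\tau_\La}$ lie in the interval with endpoints $\frac{1}{1+e^{-2\be\deg(v)-2h_v}}$ and $\frac{1}{1+e^{2\be\deg(v)-2h_v}}$, so their difference is at most the length of that interval, which is exactly $M(\deg(v), h_v, \be)$ by~\eqref{eq:M}.

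There is essentially no hard step here; the only point requiring a little care is the conditional-independence claim, namely that $p^{\tau}_{G,\be,h}(\si_v=1 \mid \si_{N(v)}=\eta)$ does not depend on the spins outside $\{v\}\cup N(v)$ nor on whether those further spins are themselves being conditioned on. This is the Markov property of the Ising model: writing $Z$ for the configuration on $V \setminus (\{v\}\cup N(v))$, the Gibbs weight factors as a function of $(\si_v, \eta)$ times a function of $(\eta, Z)$, so after summing over $Z$ (subject to any boundary constraints on $Z$, which do not involve $\si_v$) the ratio defining $p^{\tau}_{G,\be,h}(\si_v=1\mid \si_{N(v)}=\eta, \text{rest})$ is unchanged. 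One should also note the edge cases $\deg(v)=0$ (then $M=0$ and the bound is trivial, as $\si_v$ is independent of everything) and the interaction with the boundary set $B$ from the definition of $p^\tau$: if $v \in B$ the marginal is $0$ or $1$ regardless of $\si'$, so the difference is $0 \le M$; if $v \notin B$ the argument above goes through verbatim with the boundary condition $\tau$ simply restricting the sum over $Z$.

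Finally I would remark that combining this with Lemma~\ref{l:M} gives the quantitative consequence that will be used downstream: when $|h_v| \ge |\be|\deg(v) + \frac12\log(1/\ep)$, and a fortiori when $|h_v| \ge |\be|\De + \frac12\log(1/\ep)$ using Lemma~\ref{lem:Mmon}, the boundary influence on $v$ is less than $\ep$. But the statement of Lemma~\ref{l:marg-M} itself is exactly the interval-length bound above, so the proof is complete once that is established.
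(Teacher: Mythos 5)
Your proof is correct and follows essentially the same route as the paper's: condition on the spins at $N(v)$, observe that the conditional marginal is an explicit function of $\sum_{w\in N(v)}\eta_w$ whose extreme values are the two terms in the definition of $M$, and average. The extra care you take with the Markov property and the edge cases ($\deg(v)=0$, $v$ in the boundary set) is fine but not needed beyond what the paper's two-step reduction already handles.
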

\begin{proof}
We first prove the result when $\La$ contains all the neighbors of $v$. 
In this case, considering only the relevant part of the Hamiltonian and temporarily abbreviating $\sigma=\sigma_{\Lambda}$ on the right-hand side,
\begin{equation*}
p_v^{\si_{\Lambda}} = \fc{e^{\sum_{y\in N(v)} \be \si_y+h_v}}{e^{\sum_{y\in N(v)} \be \si_y +h_v} + e^{\sum_{y\in N(v)} -\be \si_y-h_v}}.
\end{equation*}
The maximum and minimum possible values of this are $\rc{1+e^{-2\be\deg(v) - 2h_v}}$ and $\rc{1+e^{2\be\deg(v) - 2h_v}}$. Hence $|p_v^{\si_\La} - p_v^{\tau_\La}|$ is bounded by $M(\deg(v),h_v,\be)$. 
For general $\La$, by conditioning on the value of $\si$ on $N(v)$, we can write $p_v^{\si_\La}$ as a weighted average of $p_v^{\si_{N(v)}}$, so the lemma follows in this case as well.
\end{proof}

This bound allows us to prove rapid mixing of Glauber dynamics via the
method of path coupling~\cite{bubley1997path}.  This in turn gives us
randomized polynomial-time approximate counting and sampling
algorithms when the external field is \emph{uniformly}
large. 
We recall that the Glauber dynamics are a
time-homogeneous Markov chain $(\sigma(n))_{n\in \N}$ that evolves by uniformly selecting a vertex $x\in V$, and then updating
the spin at $x$ according to the marginal distribution at $x$ conditioned on the spins of its neighbors, i.e., $\P(\sigma_{x}(n+1)=1) =
p_{\beta,h}^{\tau}(\sigma_{x}=1|\sigma_{V\setminus
  \{x\}}=\sigma(n)_{V\setminus
  \{x\}})$. The other spins are unchanged in this step.
  
\begin{theorem}
  \label{thm:PC}
 Fix $\De\in\{2,3,\dots\}$. %Let $G$ be a graph on $n$ vertices of maximum degree at most $\Delta$. 
  If $|h_{x}|\geq h_{0}(\Delta,\beta)=\Delta|\be|+
  \frac{1}{2}\log\Delta$, then the mixing time of the Glauber dynamics 
  for the Ising model on $G\in\mathcal{G}_{\Delta}$ with $|V(G)|=n$ is $O( n \log n)$.  
\end{theorem}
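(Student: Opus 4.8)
The plan is to prove rapid mixing of the Glauber dynamics via path coupling, using the bound of Lemma~\ref{l:marg-M} to control the contraction of a suitable coupling on configurations differing in a single spin. First I would set up the path-coupling framework~\cite{bubley1997path}: equip the state space $\{\pm1\}^{V(G)}$ with the Hamming metric, for which the diameter is $n$, and it suffices to exhibit a one-step coupling of the Glauber dynamics started from configurations $\sigma,\sigma'$ that differ at exactly one vertex, say $w$, such that the expected Hamming distance after one step is at most $(1-\frac{c}{n})$ for some constant $c>0$ depending on $\Delta,\beta$. Then the standard path-coupling theorem gives mixing time $O(n\log n)$.

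Next I would carry out the single-step coupling analysis. Pick the same vertex $x\in V$ uniformly at random in both chains, and couple the spin updates at $x$ optimally (monotone/maximal coupling of the two conditional marginals). There are three cases. If $x=w$ (probability $1/n$), the two chains agree at $x$ after the update by construction and the distance drops to $0$ --- a decrease of $1$. If $x$ is not a neighbor of $w$ (and $x\ne w$), the conditional marginals at $x$ are identical in the two chains since they depend only on the spins of $N(x)$, on which $\sigma$ and $\sigma'$ agree; the optimal coupling keeps them equal and the distance is unchanged. If $x\in N(w)$ (there are $\deg(w)\le\Delta$ such vertices, each chosen with probability $1/n$), the two conditional marginals at $x$ differ, and under the optimal coupling they disagree with probability equal to their total variation distance, which by Lemma~\ref{l:marg-M} is at most $M(\deg(x),h_x,\beta)\le M(\Delta,h_x,\beta)$ using the monotonicity in $\Delta$ from Lemma~\ref{lem:Mmon}; in that event the distance increases by $1$. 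Collecting terms, the expected change in Hamming distance is at most
\begin{equation*}
-\frac1n + \frac1n\sum_{x\in N(w)} M(\Delta,h_x,\beta) \le -\frac1n + \frac{\Delta}{n}\max_{x}M(\Delta,h_x,\beta).
\end{equation*}
By hypothesis $|h_x|\ge h_0(\Delta,\beta)=\Delta|\beta|+\frac12\log\Delta$, so applying Lemma~\ref{l:M} with $\epsilon=1/\Delta$ gives $M(\Delta,h_x,\beta)<1/\Delta$, whence the expected change is strictly negative; to get a genuine contraction factor $1-\frac{c}{n}$ one takes $\epsilon$ slightly smaller, e.g. $\epsilon = \frac{1}{2\Delta}$ requiring $|h_x|\ge \Delta|\beta|+\frac12\log(2\Delta)$, which is absorbed into the constant, or one simply notes the statement as given yields non-positive drift and a standard $\epsilon$-slack argument closes it. I would then invoke the path-coupling theorem to conclude the mixing time is $O(n\log n)$, with the implied constant depending on $\Delta$ and $\beta$ through $c$.

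The only genuinely delicate point is the interface between the ``non-positive drift'' that the stated hypothesis literally gives (via Lemma~\ref{l:M} with $\epsilon=1/\Delta$, yielding expected change $<0$ but not obviously bounded away from $0$ by $\Omega(1/n)$) and the strict contraction $1-\Omega(1/n)$ that path coupling wants. This is routine to fix --- either by a marginally larger field threshold, by the observation that $M$ is continuous and strictly below $1/\Delta$ so there is automatic slack once the inequality on $|h_x|$ is strict, or by appealing to the version of path coupling that only needs non-expansion together with a positive probability of a distance decrease at each step. I expect this to be the main (minor) obstacle; everything else is a direct application of Lemmas~\ref{l:M}, \ref{lem:Mmon}, and~\ref{l:marg-M}.
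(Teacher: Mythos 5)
Your proposal is correct and follows essentially the same route as the paper: path coupling on the Hamming metric, the same three cases for the updated vertex, and the bound of Lemma~\ref{l:marg-M} combined with Lemmas~\ref{lem:Mmon} and~\ref{l:M} to get negative drift. The ``delicate point'' you flag dissolves on inspection: the proof of Lemma~\ref{l:M} actually yields $M(\Delta,h,\beta)\le \frac{\epsilon}{1+\epsilon}$, so with $\epsilon=1/\Delta$ the expected change in Hamming distance is at most $-\frac{1}{n}\left(1-\frac{\Delta}{\Delta+1}\right)=-\frac{1}{n(\Delta+1)}$, a contraction bounded away from zero uniformly in the instance, so no strengthening of the field threshold or extra slack argument is needed.
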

\begin{proof}
  Fix $G\in \mathcal{G}_{\Delta}$. As described above, in a single
  step of the Glauber dynamics we pick $x \in V(G)$ uniformly at
  random and then update the spin $\sigma_x$ conditioned on the spins
  of $N(x) = \{y\in V \mid \{x,y\}\in E(G)\}$.

  Consider two configurations $\sigma, \sigma'$ that disagree only at
  vertex $y$.  We couple two copies of the Glauber dynamics starting
  from $\sigma(0) = \sigma$ and $\sigma'(0)=\sigma'$ respectively by
  picking the same vertex $x$ to update and updating to the same spin
  with as high probability as possible. We analyze how the Hamming
  distance between the two configurations changes in a single step of
  the chain.
  \begin{enumerate}
  \item If $x= y$, the vertex at which $\si$ and $\si'$ disagree, then
    with probability one $\sigma(1)= \sigma'(1)$, and the Hamming
    distance decreases by $1$. This occurs with probability $1/n$.
  \item If $d(x,y) >1$, then both chains see the same boundary conditions
  and so make the same update. The Hamming distance does not change.
  \item If $x \in N(y)$, then the two chains see different boundary
    conditions. 
    By Lemmas~\ref{lem:Mmon} and~\ref{l:marg-M} the difference in the
    probability of updating to $+1$ is at most $M(\De, h_v,\be)$,
    and this is an upper
    bound on the expected change in Hamming distance.  This occurs with
    probability $\De/n$.
  \end{enumerate}
  Let $\de(\sigma,\sigma')$ be the expected change in Hamming distance
  between $\sigma$ and $\sigma'$ after one step
  of the coupled chains. Since $|h_{v}|\geq h_{0}(\Delta,\beta)$,
  Lemma~\ref{l:M} and the considerations above give
  \begin{equation*}
    \de(\sigma,\sigma') \le - \frac{1}{n} \left[ 1 - \Delta
      M(\De,h_{0},\beta)  \right] <0. 
  \end{equation*}
  The theorem follows by path coupling, see,
  e.g.,~\cite[Corollary~14.7]{levin2017markov}. 
\end{proof}

\begin{remark}
  \label{rem:Glauber}
  Analyzing the Glauber dynamics without a uniform lower bound on
  $|h_{x}|$ would require addressing the fact that the dynamics are
  not contractive at each step. To see this, suppose that
  $h_{x}\in \{-h,0,h\}$, and note $M(\Delta,0,\be)\approx 1$ if $\be$ is
  large. In this case when the dynamics act on vertices with $h_{x}=0$ there
  is no contraction. This reflects the fact that $\be$
  is above the uniqueness threshold for the Ising model on the $\Delta$-regular tree with no
  external field. 
  
  For some types of
  disordered systems, rigorous results that rule out exponential
  relaxation have been obtained in infinite
  volume, see, e.g.,~\cite{cesi1997relaxation}.
\end{remark}

\section{Random field Ising model} 
\label{sec:ssmtree}

In this section we prove Theorem~\ref{thm:main}. We will that show strong spatial mixing, a strong form of
correlation decay, holds with high probability on a tree when
the \emph{typical} value of $|h_x|$ is large enough. Recall that $p_v^\si$ denotes the probability that $\sigma_{v}=1$ under boundary conditions $\si$.
\begin{definition}
  Let $G=(V,E)$ be a graph, and let $v\in V$ be a vertex.  We say that
  \vocab{strong spatial mixing (SSM) with rate $\alpha(\cdot)$ and
    min-distance $\ell_{0}$} holds for $v$
  if for any $\La\sub V$ and any two configurations $\si_{\La}, \tau_{\La}\in\{\pm 1 \}^{\Lambda}$,
\begin{equation*}
|p_v^{\si_{\La}} - p_v^{\tau_\La}| \le \alpha (d(v,\La'))
\end{equation*}
whenever $d(v,\La')\ge \ell_0$, 
where $\La'\subeq \La$ is the subset on which $\si_{\La}$ and $\tau_\La$ differ.
\end{definition}

The standard definition of \vocab{strong spatial mixing with
    rate $\alpha$} corresponds to taking $\ell_{0}=0$. Taking $\ell_{0}$
  non-zero is a weaker condition. 
The preceding definition is partly inspired by Camia, Jiang, and Newman~\cite{camia2018note}, who obtained a certain \emph{non-uniform} spatial mixing result on $\Z^d$. 
For algorithmic purposes a uniform spatial mixing result is necessary, and we establish such a result in  Lemma~\ref{l:tree-mix} using ideas similar to those of~\cite{camia2018note}. The cost of uniformity is that we obtain SSM with min-distance $\ell_0$ of order $\log n$. Section~\ref{s:ssmsawtree} extends this SSM result to SAW trees, and we prove Theorem~\ref{thm:main} in Section~\ref{s:proofmain}.

For completeness, we show how our generalization of the technical result in~\cite{camia2018note} leads to a generalization of~\cite[Theorem~6]{camia2018note} in Section~\ref{sec:NUSM}. This section does not play a role in our algorithmic results.

\subsection{Disagreement percolation and spatial mixing}

Lemma~\ref{l:bound-by-perc} below relates the distance between marginal distributions to
the probability of disagreement percolation from the boundary to the region of
interest. First, we define the relevant notion of a percolation process.
\begin{definition}
  Let $G=(V\cup \partial V,E)$ be a finite graph.
  % Let $p\colon V\to [0,1]$.
  Define (inhomogeneous, independent)
  \vocab{site percolation} with probabilities $p_x\in [0,1]$, for each $x\in V$, as the following
  process. Let $T\in \{0,1\}^{V\cup \pl V}$, where
  $(T_x)_{x\in \pl V}$ are given boundary conditions on $\partial V$ and
  $(T_x)_{x\in V}$ are independent Bernoulli random variables with
  $\P(T_x=1)=p_x$.  We denote the
  law of $(T_x)_{x\in V}$ by $P_p$. 
\end{definition}
In the preceding definition the boundary condition is implicit in the notation $P_p$; we will explicitly highlight the boundary condition in what follows. As is standard in percolation, for disjoint $A,B\sub V$, we write $A\lra B$ if
  there exists a path $v_0,v_1,\ldots, v_d$ with $v_0\in A$ and
  $v_d\in B$, such that $T_{v_i}=1$ for each $0\le i\le d$.
  
\begin{lemma}[{cf. \cite[Lemma 5]{camia2018note}}]\label{l:bound-by-perc}
  Given an Ising model on a connected 
  graph $G=(V\cup\partial V,E)$, let $A\sub V$, and let
  $\eta, \xi$ be two boundary conditions on $\partial V$.  Let $P_p$ be the law of a site percolation $T$ with boundary condition 
  $T_x=\one[\eta_x \ne \xi_x]$ for $x\in \pl V$, and
  $p_x = M(\deg(x), h_x, \be)$ for all other vertices.  Then
\begin{equation*}
d_{TV}(p^{\eta}_{\be, h}(\si_A\in \cdot ),p^{\xi}_{\be, h}(\si_A\in \cdot))
\le  P_{p}(\pl V \lra A).
\end{equation*}
\end{lemma}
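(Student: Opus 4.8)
The plan is to couple the two boundary-conditioned Ising measures $p^{\eta}_{\be,h}$ and $p^{\xi}_{\be,h}$ using a sequential update/revealing procedure that simultaneously couples a run of the site percolation $T$, in the spirit of disagreement percolation (\cite{van1993uniqueness,van1994percolation,camia2018note}). Concretely, I would fix an enumeration of $V$ and construct a joint distribution on triples $(\si,\si',T)$ where $\si\sim p^{\eta}_{\be,h}$, $\si'\sim p^{\xi}_{\be,h}$, and $T$ has the law $P_p$ of the lemma, with the key invariant maintained throughout the revealing process: at any stage, if a vertex $x$ has been assigned a value, then either $\si_x=\si'_x$, or $x$ is connected to $\partial V$ by a path of vertices all of which have been assigned and all of which carry $T=1$ (``open''). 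In particular a vertex $x$ with $T_x=0$ but whose revealed neighbors are all in agreement gets the same spin in both copies.

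The main step is the per-vertex update. When we reveal vertex $x$, its conditional spin distribution (given already-revealed spins) is the single-site Ising marginal, which depends only on the spins at $x$'s already-revealed neighbors and on $h_x$; this is exactly the quantity analyzed in Lemma~\ref{l:marg-M}. If all revealed neighbors of $x$ agree in $\si$ and $\si'$, the two conditional marginals for $x$ are identical, so we can set $\si_x=\si'_x$, and we then sample $T_x\in\{0,1\}$ with $\P(T_x=1)=p_x$ independently; the invariant is preserved since $x$ inherits agreement. If some revealed neighbor of $x$ disagrees, then by Lemma~\ref{l:marg-M} (with the monotonicity of $M$ in the degree, Lemma~\ref{lem:Mmon}, used to bound $M(\deg_{\text{revealed}}(x),h_x,\be)$ by $M(\deg(x),h_x,\be)=p_x$) the total variation distance between the two conditional marginals for $x$ is at most $p_x$; we maximally couple $\si_x,\si'_x$, and \emph{on the event that they disagree} — which has probability at most $p_x$ — we force $T_x=1$, while on the event that they agree we are free to sample $T_x$ with the correct conditional probability so that marginally $\P(T_x=1)=p_x$. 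In either disagreement-of-neighbor case, $x$ lies one open step from a vertex already connected to $\partial V$ (here I would make sure the revealing order is chosen so that a disagreeing revealed neighbor is itself open and $\partial V$-connected — e.g. breadth-first from $\partial V$), so the invariant survives. One must check this construction yields the correct marginals for $\si$ and $\si'$ (standard: each conditional is the true Ising conditional) and for $T$ (the $T_x$ are independent Bernoulli$(p_x)$ because in every branch the conditional probability of $T_x=1$ is arranged to equal $p_x$).

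Given the coupling, the conclusion is immediate: under the invariant, if $\si_A\ne\si'_A$ then some vertex of $A$ disagrees, hence is connected to $\partial V$ by an open path, i.e. $\partial V\lra A$ in $T$; therefore
\begin{equation*}
d_{TV}\big(p^{\eta}_{\be,h}(\si_A\in\cdot),\,p^{\xi}_{\be,h}(\si_A\in\cdot)\big)\le \P(\si_A\ne\si'_A)\le \P(\partial V\lra A)=P_p(\partial V\lra A).
\end{equation*}
I expect the main obstacle to be bookkeeping in the sequential construction: choosing the revealing order (breadth-first from $\partial V$) so that a ``bad'' revealed neighbor is always already open-connected to $\partial V$, and verifying carefully that after coupling spins one can still allocate the residual randomness to make $T_x$ exactly Bernoulli$(p_x)$ and independent across $x$ — in particular that the probability of the disagreement event never exceeds $p_x$, which is where Lemma~\ref{l:marg-M} and Lemma~\ref{lem:Mmon} enter. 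The degree subtlety (revealed degree vs.\ true degree) and handling vertices in $V$ adjacent to $\partial V$ (whose boundary-spin disagreements are recorded by the prescribed boundary values $T_x=\one[\eta_x\ne\xi_x]$) are the other points requiring care.
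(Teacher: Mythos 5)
There is a genuine gap at the heart of your per-vertex update. You assert that when $x$ is revealed, its conditional spin distribution given the already-revealed spins ``is the single-site Ising marginal, which depends only on the spins at $x$'s already-revealed neighbors and on $h_x$.'' This is false unless \emph{all} neighbors of $x$ have already been revealed: the correct conditional law of $\sigma_x$ given the revealed spins is the marginal of the Gibbs measure with the unrevealed spins summed out, and it can feel revealed disagreements far from $x$ through paths of unrevealed vertices. You are then stuck either way. If you sample from your single-site conditionals, the resulting configurations do not have the laws $p^{\eta}_{\be,h}$ and $p^{\xi}_{\be,h}$, so the coupling proves nothing about the quantities in the lemma. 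If you sample from the true conditionals, then your key step --- ``all revealed neighbors of $x$ agree, hence the two conditional marginals are identical, so set $\sigma_x=\sigma'_x$'' --- is no longer valid, and a disagreement can be created at a vertex none of whose revealed neighbors disagrees; breadth-first revealing from $\pl V$ then fails to preserve your invariant that every disagreeing vertex is joined to $\pl V$ by an open path, and the comparison with $P_p(\pl V\lra A)$ collapses.

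The paper's proof repairs exactly this point with a different exploration: at each step it reveals only an unexplored vertex \emph{adjacent to the current set of disagreeing (open) explored sites}, sampling $\sigma^{(1)}_x,\sigma^{(2)}_x$ from the true conditional marginals via a maximal coupling. Lemma~\ref{l:marg-M} (together with Lemma~\ref{lem:Mmon}) applies to an arbitrary conditioning set $\La$, not merely to neighbors, so the disagreement probability at the newly revealed vertex is at most $M(\deg(x),h_x,\be)=p_x$ no matter what has been revealed so far; and by the choice of exploration any new disagreement is automatically adjacent to the existing open cluster, so the connectivity invariant holds without the neighbor-agreement claim. When no unexplored vertex is adjacent to the disagreement set, the unexplored region sees only agreeing spins on its explored boundary and the two conditional laws there coincide by the Markov property, so the remainder can be coupled identically. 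Finally, you need not allocate residual randomness to make $T$ exactly Bernoulli inside the coupling: the disagreement indicators form a process whose conditional one-step probabilities are at most $p_x$, hence it is stochastically dominated by the independent percolation $T$, and $\P(\sigma_A\ne\sigma'_A)$ is bounded by the probability of $\pl V\lra A$ under that dominated process, hence by $P_p(\pl V\lra A)$. Incorporating this exploration order (and dropping the false locality claim) is what your argument is missing.
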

\begin{proof}
Order the vertices of $V=\{x_1,x_2,\ldots\}$ in such a way that $x$ precedes $y$ in the ordering if $d(x,\pl V)<d(y,\pl V)$. We couple draws $\si^{(1)}$, $\si^{(2)}$, $S_x$ by drawing $\si^{(1)}_x,\si^{(2)}_x, S_x$ sequentially according to an exploration process. $S$ will be a site percolation process with boundary condition given by
\begin{align*}
S_x &= \one [\eta_x\ne \xi_x]\text{ when }x\in \pl V.
\end{align*}
For $t\in \N_0$, let $W_t$ denote the set of sites explored up to and including time $t$, and let $V_t:=\{x\in W_t:S_x=1\}$. We now inductively define the explored set.
\begin{itemize}
\item
Let $W_0:=\pl V$. 
\item
For each $t\ge 0$, reveal the first unexplored site $x$ (according to
the chosen ordering) that is adjacent to $V_t$. Note that this eventually
exhausts the graph since $G$ is connected. We set the values
of $\si_x^{(i)}$ to have the correct marginal distributions, and to be
a maximal coupling. More precisely, let $U_x$ be a independent uniform
random variable in $[0,1]$, let $\nu^{(1)}=\eta$, $\nu^{(2)}=\xi$, and let
\begin{align*}
\si_x^{(i)} &= \begin{cases}
1, & U_x\le p^{\nu^{(i)}}_{\be, h}(\si_x=1|\si_{W_t} = \si_{W_t}^{(i)})\\
-1, &\text{otherwise,}
\end{cases}\\
S_x &= \one[\si_x^{(1)}\ne \si_x^{(2)}].
\end{align*}
Then let $W_{t+1}:=W_t\cup \{x\}$. 
\end{itemize}
Note that conditioned on $\si_{W_t}^{(i)},i=1,2$,
we have $\si_x^{(1)}\ne \si_x^{(2)}$ with probability at most
$M(\deg(x), h_x, \be)$, and as a result the site percolation process $T$ with the same boundary condition as $S$ stochastically
  dominates $S$. 
With this coupling, $\si_A^{(1)}\ne \si_A^{(2)}$ only if $\pl V \lra A$ in $S$; by stochastic domination this is at most the probability that $\pl V \lra A$ in $T$.
\end{proof}

We now use Lemma~\ref{l:bound-by-perc} to show  
how an assumption that the external field is
typically large results in a strong spatial mixing property on trees. We quantify typically large by requiring the following condition on the external field distribution $h$ (for a parameter $h_0$ to be specified):
\begin{equation}
    \label{e:lf}
    \P\pa{|h|< h_0}\le 
    \rc{16 \De^2}.
\end{equation}
\begin{lemma}
  \label{l:tree-mix}
  Let $G$ be a tree with max degree $\De$ and root vertex $v$. 
  Let $h_0$ be such that
  $M(\Delta, h_0,\be) < \De^{-2}$.  
  Suppose $h_x$ are such that
  along each path from any $w$ to $v$, the $h_x$ are independent and satisfy~\eqref{e:lf}. 
Then with probability at least $1-\delta$ over the $h$, there is a $c_1>0$ such that strong spatial mixing holds with rate 
\begin{equation*}
\alpha (\ell) = e^{-c_1 \ell} 
\end{equation*}
for $v$ and for min-distance $\ell_0:=\log_2 \pf{1}{2\delta}$. %c_2}$.
\end{lemma}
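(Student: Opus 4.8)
The plan is to apply Lemma~\ref{l:bound-by-perc} to turn the spatial mixing bound into a statement about a site percolation process on the tree $G$ rooted at $v$, and then to control that percolation process using a first-moment/union bound argument over paths, exploiting the fact that the percolation probabilities $p_x = M(\deg(x), h_x, \be)$ are \emph{typically} tiny. Concretely, fix $\La \sub V$, two boundary conditions $\si_\La, \tau_\La$, and let $\La'$ be the set on which they disagree. View $\La$ as a boundary $\pl V$ in the sense of the percolation lemma (formally, restrict to the subtree spanned by $v$ and $\La'$, or just set the boundary condition $T_x = \one[x \in \La']$ and $p_x = M(\deg x, h_x, \be)$ elsewhere). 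By Lemma~\ref{l:bound-by-perc}, $|p_v^{\si_\La} - p_v^{\tau_\La}| \le d_{TV}(p^{\si_\La}_v(\cdot), p^{\tau_\La}_v(\cdot)) \le P_p(\La' \lra v)$, so it suffices to show that with probability $1-\de$ over $h$, this percolation connectivity probability is at most $e^{-c_1 d(v,\La')}$ simultaneously for all $\La'$ with $d(v,\La') \ge \ell_0$.

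The key step is a two-source-of-randomness argument. There is randomness in $h$ (which determines the $p_x$) and randomness in the percolation $T$ given $h$. Since $G$ is a tree, any path from $\La'$ to $v$ that is open in $T$ must pass through the unique simple path in the tree from $v$ out to its descendants; more usefully, for the event $\{\La' \lra v\}$ there must be an open path from $v$ down to $\La'$, and on a tree this is a single self-avoiding path. So $P_p(\La' \lra v) \le \sum_{\text{paths } \gamma: v \to \La'} \prod_{x \in \gamma} p_x$ where each $p_x \le M(\deg x, h_x, \be)$. Now take expectation (or rather, bound in probability) over $h$: along a fixed path $\gamma$ of length $\ell$, the $h_x$ are independent by hypothesis, and $M(\deg x, h_x, \be) \le M(\De, h_x, \be)$ is at most $\De^{-2}$ deterministically whenever $|h_x| \ge h_0$ (by Lemma~\ref{lem:Mmon} and the choice of $h_0$), and is at most $1$ always. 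By~\eqref{e:lf}, $\P(|h_x| < h_0) \le \rc{16\De^2}$, so $\E_h[M(\deg x, h_x, \be)] \le \De^{-2} + \rc{16\De^2} \cdot 1 \le \rc{8\De^2}$, say. Hence $\E_h\big[\prod_{x\in\gamma} M(\deg x, h_x, \be)\big] \le (8\De^2)^{-\ell}$. Since a tree of max degree $\De$ rooted at $v$ has at most $\De(\De-1)^{\ell-1} \le \De^\ell$ vertices at distance $\ell$, and the number of paths from $v$ of length exactly $\ell$ is likewise at most $\De^\ell$, we get $\E_h[P_p(N(v,\ell) \lra v)] \le \De^\ell \cdot (8\De^2)^{-\ell} = (8\De)^{-\ell}$. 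Summing the geometric series over all $\ell \ge \ell_0$ and applying Markov's inequality: with probability $1-\de$ over $h$, $\sup_{\ell \ge \ell_0} P_p(N(v,\ell) \lra v) \cdot e^{c_1 \ell}$ stays bounded, for $c_1$ slightly less than $\log(8\De)$; quantitatively, $\P_h\big(\exists \ell \ge \ell_0 : P_p(N(v,\ell)\lra v) > 2^{-\ell}\big) \le \sum_{\ell \ge \ell_0} 2^\ell (8\De)^{-\ell} \le \de$ once $\ell_0 = \log_2(1/(2\de))$ (after checking the constants — this is where the specific value of $\ell_0$ and the constant $\rc{16\De^2}$ in~\eqref{e:lf} get used). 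On this good event, for any $\La'$ with $d(v,\La') = \ell \ge \ell_0$, every open path from $\La'$ to $v$ contains a vertex of $N(v,\ell)$, so $P_p(\La' \lra v) \le P_p(N(v,\ell) \lra v) \le 2^{-\ell} = e^{-c_1 \ell}$ with $c_1 = \log 2$, giving SSM with rate $\alpha(\ell) = e^{-c_1\ell}$ and min-distance $\ell_0$.

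One technical point to be careful about: the good event must be chosen \emph{independently of} $\La$ and the boundary conditions, so that a single realization of $h$ gives SSM uniformly. The argument above does this, because the bound $P_p(N(v,\ell) \lra v) \le 2^{-\ell}$ refers only to connectivity from the sphere $N(v,\ell)$ to $v$ in the percolation with \emph{all} non-root sites open with probability $M(\deg x, h_x, \be)$ — this dominates the percolation for any particular $\La'$ regardless of where $\La'$ sits, using monotonicity of connection probabilities in the $p_x$ and the fact that forcing $\La'$ fully open only matters if $\La' \subseteq N(v,\ell)$ is reached. The main obstacle I anticipate is bookkeeping the constants so that the hypothesis $\P(|h|<h_0) \le \rc{16\De^2}$ and the choice $\ell_0 = \log_2(1/(2\de))$ line up exactly with a clean rate like $\alpha(\ell)=e^{-c_1\ell}$; the conceptual content — disagreement percolation is subcritical on the tree because edge/site-open expectation times branching number is below $1$ — is straightforward, but one has to be slightly careful that we only get \emph{expected} subcriticality over $h$, hence the $1-\de$ failure probability and the $\log(1/\de)$-sized min-distance, rather than an almost-sure statement.
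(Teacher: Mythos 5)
Your overall strategy is the same as the paper's: bound the influence of the boundary via the disagreement-percolation Lemma~\ref{l:bound-by-perc}, observe that on a tree connectivity forces a single self-avoiding path so that $P_p(\La'\lra v)$ is controlled by sums over root-to-sphere paths of $\prod_x M(\De,h_x,\be)$, show this is exponentially small with high probability over $h$, and union bound over scales $\ell\ge \ell_0$. The only methodological difference is how the randomness of $h$ is handled: the paper fixes the good event ``along every path of length $\ell$ at least half the vertices have $|h_x|\ge h_0$'' via Chernoff--Hoeffding plus a union bound over the at most $\De^\ell$ paths, and then bounds each path product deterministically by $M(\De,h_0,\be)^{\ell/2}\le e^{-c_1\ell}\De^{-\ell}$; you instead apply Markov's inequality to the $h$-expectation of the path-sum. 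These are interchangeable first-moment arguments, so I would not call your route genuinely different, but two steps as written do not go through.

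First, the arithmetic in the key expectation bound is wrong: the hypothesis only gives $M(\De,h_x,\be)<\De^{-2}$ when $|h_x|\ge h_0$, so $\E_h[M(\De,h_x,\be)]\le \De^{-2}+\tfrac{1}{16\De^2}=\tfrac{17}{16\De^2}$, which is \emph{not} at most $\tfrac{1}{8\De^2}$. The approach survives, since the expected per-vertex openness times the branching factor is still $\tfrac{17}{16\De}<1$ for $\De\ge 2$, but your subsequent numerology collapses: with the corrected bound, Markov at threshold $2^{-\ell}$ gives per-level failure probability $\bigl(\tfrac{17 \cdot 2}{16\De}\bigr)^{\ell}$, which does not even decay when $\De=2$, so you cannot claim $c_1=\log 2$ and min-distance exactly $\ell_0=\log_2\pf{1}{2\de}$. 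You do still obtain SSM with \emph{some} $c_1>0$ and $\ell_0=O(\log(1/\de))$, which is all that is used downstream (and the paper's own final union bound has a comparably small constant slip), but as stated your claim ``this is where the constants line up exactly'' is not correct and needs to be reworked, e.g.\ by choosing the Markov threshold $e^{-c_1\ell}$ with $c_1$ small depending on $M(\De,h_0,\be)\De^2$, as in the paper.

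Second, the uniformity-in-$\La$ step is not a legitimate stochastic domination as you state it. In the percolation with boundary condition $\La$, vertices of $\La'$ lying on the sphere $N(v,\ell)$ are open with probability $1$, whereas in your ``dominating'' process every non-root site is open only with probability $M(\deg x,h_x,\be)$; hence $P_p(\La'\lra v)\le P_p(N(v,\ell)\lra v)$ does not follow from monotonicity in the $p_x$'s. The fix is easy and keeps your structure: define the good event in terms of the purely $h$-measurable quantity $\sum_{u\in N(v,\ell)}\prod_{x\in\ga_{uv},\,x\ne u}M(\De,h_x,\be)$, i.e.\ drop the openness requirement at the sphere vertex itself. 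Any open path from $\La'$ (at distance $\ge\ell$) to $v$ must have its initial segment of length $\ell$ fully open, so this quantity upper bounds $P_p(\La'\lra v)$ for every admissible $\La$, and its $h$-expectation obeys the same bound as before up to one lost factor. With these two repairs (and the resulting weaker constants in $c_1$ and $\ell_0$), your argument is a valid proof of the lemma in the form actually needed for Corollary~\ref{cor:ssmdSAW}.
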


\begin{remark}
  We can take $c_1 = -\rc2 \log(M(\De, h_0,\be)\De^2)$.
  Examining the proof shows that the right-hand side of~\eqref{e:lf} can be improved to $O\prc{\De^{1+\ep}}$ for any $\ep>0$ at the cost of a tighter bound on $M(\De, h_0,\be)$, by replacing $\fc{\ell}2$ in~\eqref{e:CH} by $(1-O(\ep)) \ell$. 
\end{remark}
\begin{proof}
First, we fix $\ell\ge \ell_0$ and consider the case when $\si_\La$ and $\tau_\La$ disagree at exactly 1 vertex $w$ at distance $\ell$ from $v$. There is a unique path $\ga_{wv}$ joining $w$ and $v$. If $\ga_{wv}$ contains another vertex $w'\in\La$ besides $w$, then the probability of the spin at $u$ is conditionally independent of the spin at $w$ given the spin at $w'$, so $p^{\si_\La}_u=p^{\tau_\La}_u$. Otherwise, we apply Lemma~\ref{l:bound-by-perc}: letting $p_x=M(\deg(x), h_x,\be)$ for $x\in V\bs \La$, we have 
\begin{align*}
|p_v^{\si_\La} - p_v^{\tau_\La}| &\le
    P_{p}(w\lra v)
    \le \prod_{u\in \ga_{wv}} M(\Delta, h_u,\be),
\end{align*}
since percolation occurs only if all sites on the path $\ga_{wv}$ have value 1; the second inequality is by Lemma~\ref{lem:Mmon}.
In the general case, by changing the vertices at distance $\ell$ one at a time, we get that 
\begin{align*}
|p_v^{\si_\La} - p_v^{\tau_\La}| &\le \sum_{w:d(v,w)=\ell} \prod_{u\in \ga_{wv}} M(\De,h_u,\be).
\end{align*}
To estimate this we next establish that most $h_u$ are large.  Let $p = \mathbb{P}(|h|<h_0)$. Using the upper bound $p\le \rc{16\De^2}$ from \eqref{e:lf} and
the Chernoff-Hoeffding bound we obtain that, with high probability, most of the $h_u$'s along a path are large. 
\begin{align}
\label{e:CH}
\P\pa{\abb{\{u\in \ga_{wv}:|h_u|\ge h_0\}}\ge \fc \ell2} &\ge
1-e^{-\pa{\rc 2 \log \pf{1/2}p + \rc 2 \log\pf{1/2}{1-p}}\ell}\\
\nonumber
&\ge 1-e^{\pa{\log 2 - \rc2\log \prc{p}}\ell} \\
\nonumber
&\ge 1-2^\ell p^{\rc 2\ell }
\ge 1-\prc{\De}^\ell 2^{-\ell}.
\end{align}

Under this event and by the hypothesis that $M(\Delta,h_0,\beta)<\Delta^{-2}$, we have that there exists $c_1>0$ such that 
\begin{equation*}
\prod_{u\in \ga_{wv}} M(\Delta,h_u,\be) \le \pf{e^{-2c_1}}{\De^2}^{\ell/2}
= \fc{e^{-\ell c_1}}{\De^\ell}.
\end{equation*}
In particular, $c_1=-\rc2 \log(M(\De, h_0,\be)\De^2)$ works. 
Hence, doing a union bound over all paths, we get that with probability at least %$\ge 
$1- 2^{-\ell}$, %e^{-c_2\ell}$, 
\begin{align*}
|p_v^{\si_\La} - p_v^{\tau_\La}| &\le e^{-\ell c_1}.
\end{align*}
for any $\si_{\La},\tau_{\La}$ disagreeing in $\La'$, where $d(\La,\La')=\ell$. Now taking a union bound over $\ell\ge \ell_0$, this holds for all $\ell\ge \ell_0$ with probability 
\begin{align*}
1-\sum_{\ell=\ell_0}^{\iy} 2^{-\ell}
\ge 1-2 \cdot 2^{-\ell_0} \ge
1-\delta.
\end{align*}
where the last inequality uses the definition of $\ell_0$.
\end{proof}

\subsection{Strong spatial mixing on the SAW tree}
\label{s:ssmsawtree}

Our next corollary will have nearly the same conclusion as Lemma~\ref{l:tree-mix}, 
but it concerns a specific tree of self-avoiding walks
(SAW tree).  
To prepare for this, we recall the construction of the SAW tree for Ising models,
see~\cite[Appendix~A]{liu2019fisher} or~\cite{zhang2011approximating},
which follows Weitz's original construction for the
hard-core model~\cite{weitz2006counting}. As this construction is
well-known and clear expositions exist in the literature, we will be
somewhat brief. 

A \vocab{self-avoiding walk of length $k$}, $(v_{i})_{i=0}^{k}$, is a
sequence of adjacent vertices, each $v_{i}$ distinct. The set of
self-avoiding walks started at a fixed vertex $v$ has a natural rooted
tree structure: the root is the length $0$ walk consisting of
$v_{0}=v$ alone, and the children of length $k$ self-avoiding walk
$(v_{i})_{i=0}^{k}$ are the length $k+1$ self-avoiding extensions
$(v_{i})_{i=0}^{k+1}$ with $v_{k+1}$ adjacent to $v_{k}$. Call this
tree $\hat T_v$. The \vocab{self-avoiding walk tree $T_{v}$ rooted at
  $v$} is obtained from $\hat T_v$ as follows. To each vertex
$(v_{i})_{i=0}^{k}$ in $\hat T_v$ append additional leaf vertices
$w_{1},\dots, w_{j}$, one for each
$v_{k+1}\in V, v_{k+1}\neq v_{k-1}$, such that $(v_{i})_{i=0}^{k+1}$
is \emph{not} self-avoiding, i.e., $v_{k+1}$ completes a cycle of
length at least three. The tree $T_{v}$ is finite, and one obtains an
Ising model on $T_{v}$ by taking the external field at
$(v_{i})_{i=0}^{k}$ to be $h_{v_{k}}$. 

The key result is then that if one defines a boundary condition $\tau$
on the leaves of $T_{v}$ correctly (i.e., $\tau_{w}\in \{-1,+1\}$ for
each leaf $w$, see Appendix~\ref{a:saw} 
or \cite{liu2019fisher,zhang2011approximating} for details of the construction), 
then the distribution of
$\sigma_{v}$ on $T_{v}$ with boundary condition $\tau$ is identical to
the distribution of $\sigma_{v}$ on $G$:
\begin{lemma}
  \label{lem:Wlemma}
  There is a choice of spins $\tau_{w}$ for the leaves $w$ of $T_{v}$
  such that the marginal distribution of the spin at the root is
  precisely the same as the marginal distribution of $\sigma_{v}$ on
  the graph $G$.
\end{lemma}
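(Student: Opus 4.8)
The plan is to follow Weitz's telescoping argument~\cite{weitz2006counting}, in the form adapted to the Ising model in~\cite{zhang2011approximating,liu2019fisher}. I work with the ratio $R_{G,v}:=p_{G,\beta,h}(\sigma_v=+1)/p_{G,\beta,h}(\sigma_v=-1)$; since $p_v=R_{G,v}/(1+R_{G,v})$, it suffices to exhibit a boundary condition $\tau$ on the leaves of $T_v$ for which the analogous root ratio of the Ising model on $(T_v,\tau)$ equals $R_{G,v}$. I would prove this by induction on $|E(G)|$, the base case $|E(G)|=0$ (so $v$ is isolated, $T_v$ a single node, $R=e^{2h_v}$) being trivial.

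The first step is to record the one-step recursion. Fix an ordering $u_1,\dots,u_d$ of the neighbors of $v$. Summing over $\sigma_v\in\{\pm1\}$ in~\eqref{e:Idef} and then revealing the effect of $v$ on the $u_i$'s one at a time --- using that attaching to $u_i$ a degree-one neighbor pinned to the value $\pm1$ with coupling $\beta$ is exactly equivalent to shifting $h_{u_i}$ by $\pm\beta$ --- yields
\begin{equation*}
R_{G,v}=e^{2h_v}\prod_{j=1}^{d}\frac{e^{2\beta}R_{G_j,u_j}+1}{R_{G_j,u_j}+e^{2\beta}},
\end{equation*}
where $G_j$ is obtained from $G$ by deleting $v$ (and its incident edges) and then attaching to each $u_i$ with $i<j$ a new pendant vertex pinned to $+1$, and to each $u_i$ with $i>j$ a new pendant vertex pinned to $-1$, all with coupling $\beta$; $G_j$ is rooted at $u_j$. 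The right-hand side is precisely the Ising tree recursion at a node of field $h_v$ with $d$ children, and $|E(G_j)|=|E(G)|-1$, so the induction is well-founded, and boundary conditions already present on $G$ (as well as the pinned pendants) are simply carried along as boundary vertices.

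The second step is to check that unrolling this recursion reproduces the pair $(T_v,\tau)$. Each descent from a partial walk $(v=w_0,\dots,w_k)$ to a neighbor $w_{k+1}\ne w_{k-1}$ of $w_k$ falls into one of two cases: if $w_{k+1}$ is new it is a genuine internal node of $T_v$; if $w_{k+1}$ already occurs in the walk, the descent lands on one of the pinned pendants introduced at the earlier stage that processed $w_{k+1}$, and this becomes a leaf of $T_v$ whose spin $\tau_{w_{k+1}}$ is $+1$ or $-1$ according as the predecessor of $w_{k+1}$ at its present occurrence precedes or follows, in the fixed neighbor-ordering at $w_{k+1}$, the successor of $w_{k+1}$ at its first occurrence. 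This matches the cycle-closing leaves appended in the construction recalled before the statement. Since every node of $T_v$ is a self-avoiding walk from $v$, the recursion terminates, and its unrolling is exactly the tree recursion on $(T_v,\tau)$; together with the first step this gives the claimed identity.

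The main obstacle is the bookkeeping in the second step: one must verify that the ordering rule assigns to each cycle-closing leaf of $T_v$ a single well-defined spin, and that the bijection between pinned pendants encountered in the unrolling and cycle-closing leaves of $T_v$ is such that each leaf contributes to the tree recursion exactly as the pendant it descends from contributes to the graph recursion. This is the content of the SAW-tree construction; I would carry out the details in Appendix~\ref{a:saw}, and they are by now standard, see also~\cite{weitz2006counting,zhang2011approximating,liu2019fisher}.
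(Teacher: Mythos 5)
Your argument is correct and is essentially the approach the paper itself relies on: the paper gives no self-contained proof of Lemma~\ref{lem:Wlemma}, deferring to Weitz's telescoping construction as adapted to the Ising model in~\cite{zhang2011approximating,liu2019fisher} and to the construction recalled in Appendix~\ref{a:saw}, and your step-one identity (pinning a pendant neighbor to $\pm1$ with coupling $\beta$ shifts the field by $\pm\beta$, then telescope over the neighbors of $v$) together with the unrolling is exactly that standard argument. One small caveat: your sign rule for cycle-closing leaves ($+1$ when the closing edge precedes the starting edge) is the mirror image of the convention stated in Appendix~\ref{a:saw} ($+1$ when the closing edge is larger); both conventions yield a valid choice of $\tau$, so the lemma as stated is unaffected, but the convention should be made consistent with the one hard-coded in the algorithmic recursion.
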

The exact way in which the boundary condition $\tau$ is determined
will not play a role in what follows, so we will not discuss
this in the main text. 
The preceding construction generalizes to the
situation in which there is a boundary condition $\xi$ for the Ising
model on $G$: in this case the corresponding spins for the Ising model
on $T_{v}$ are fixed to agree with $\xi$. Lemma~\ref{lem:Wlemma} holds
in this more general setting as well.

\begin{corollary}
  \label{cor:ssmdSAW}
  Let $G$ be a graph with max degree $\De$ on $n$ vertices.   Suppose that the distribution of $h_x$ satisfies \eqref{e:lf} for $h_0$ such that $M(\Delta,h_0,\beta)<\De^{-2}$.
  Then there exists a constant $c_1>0$ so that with probability $ 1-o(1)$ over the realization of the external fields, for every vertex $v\in V$ the SAW tree $T_v$ at
  $v$ satisfies strong spatial mixing with rate
  \begin{equation*}
    \alpha (\ell) = e^{-c_1\ell}
  \end{equation*}
  for $v$ and for min-distance $\ell_0:=\fc{\log n}{c_1}$.
\end{corollary}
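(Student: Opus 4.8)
The plan is to transfer the tree SSM statement of Lemma~\ref{l:tree-mix} to the SAW tree $T_v$ and then upgrade the per-vertex, constant-failure-probability conclusion to a simultaneous statement over all $n$ vertices. The first observation is that the SAW tree $T_v$ is itself a tree of maximum degree $\De$ (the extra leaves only decrease degree, and each internal node has at most $\De-1$ self-avoiding children plus the parent), and its external fields are the pulled-back fields $h_{v_k}$ along each walk. The key structural fact I would use is that along any root-to-node path in $T_v$ the underlying vertices $v_0,v_1,\dots,v_k$ of $G$ are \emph{distinct} (that is what ``self-avoiding'' means), so the pulled-back fields along that path are a subset of the IID family $(h_x)_{x\in V}$ and in particular are independent and satisfy \eqref{e:lf}. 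This is exactly the hypothesis ``along each path from any $w$ to $v$ the $h_x$ are independent and satisfy \eqref{e:lf}'' needed to invoke Lemma~\ref{l:tree-mix} for the root of $T_v$. Combined with Lemma~\ref{lem:Wlemma}, which says the root marginal of $T_v$ (under any boundary condition on the leaves, and compatibly with boundary conditions on $G$) equals the $\sigma_v$ marginal on $G$, an SSM bound for the root of $T_v$ is an SSM bound for $v$ in $G$ in the sense we want.

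Next I would redo the union bound of Lemma~\ref{l:tree-mix} but with the failure probability tuned so that a single union bound over all $v\in V$ succeeds. Tracing that proof: for fixed $\ell\ge\ell_0$ the bad event (some boundary disagreement at distance exactly $\ell$ forcing $|p_v^{\si}-p_v^{\tau}|>e^{-c_1\ell}$) has probability at most $2^{-\ell}$, coming from a union over at most $\De^\ell$ root-to-distance-$\ell$ paths each contributing $(\De^{-1}2^{-1})^\ell$ via the Chernoff--Hoeffding estimate \eqref{e:CH}. Summing over $\ell\ge\ell_0$ gives failure probability at most $2\cdot 2^{-\ell_0}$ for a single vertex. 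Taking $\ell_0=\frac{\log n}{c_1}$—wait, more carefully one wants $2^{-\ell_0}\le n^{-2}$ or so; since $2^{-\ell_0}=2^{-(\log n)/c_1}=n^{-(\log 2)/c_1}$, one should really set $\ell_0 = C\log n$ with $C$ large enough that $(\log 2)C>1$, and the paper's statement $\ell_0=\frac{\log n}{c_1}$ presumably carries an implicit adjustment of constants; I would state it as: there is a constant $C=C(\De,\be)$ so that with $\ell_0 = C\log n$ the per-vertex failure probability is $o(1/n)$, hence a union over the $n$ vertices $v$ gives simultaneous SSM with probability $1-o(1)$. Then rename the constant in front of $\log n$ as in the corollary.

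So the proof is: (1) note $T_v\in\mathcal G_\De$ is a finite tree and, by self-avoidance, fields along each path are independent and satisfy \eqref{e:lf}; (2) apply Lemma~\ref{l:tree-mix} to $T_v$, but reading off from its proof the explicit per-vertex failure bound $2\cdot 2^{-\ell_0}$ as a function of the chosen min-distance; (3) choose $\ell_0$ a large enough multiple of $\log n$ that this is $o(1/n)$; (4) union bound over $v\in V$; (5) invoke Lemma~\ref{lem:Wlemma} to translate root-of-$T_v$ SSM into $\sigma_v$-SSM on $G$, including with boundary conditions. I expect the only genuinely delicate point to be bookkeeping the constants so that the stated min-distance $\ell_0=\frac{\log n}{c_1}$ (with $c_1=-\frac12\log(M(\De,h_0,\be)\De^2)$) really does make the union bound over $n$ vertices succeed; this just requires that $M(\De,h_0,\be)$ be taken a bit smaller than $\De^{-2}$ so that $c_1$ is a definite positive constant and $2^{-\ell/c_1}\cdot n\to 0$, i.e.\ it is a matter of choosing $h_0$ (equivalently $H$ in Theorem~\ref{thm:main}) appropriately rather than a new idea. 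A minor secondary point worth a sentence is that SSM for the root of $T_v$ transfers to $v$ in $G$ for \emph{arbitrary} boundary conditions on $G$, using the boundary-condition version of Lemma~\ref{lem:Wlemma} noted after it; the distances only shrink under the SAW-tree map, so the min-distance hypothesis is preserved.
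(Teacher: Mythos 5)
Your proposal is correct and follows essentially the same route as the paper: observe that along any root-to-node path in $T_v$ the underlying vertices of $G$ are distinct (self-avoidance), so the fields along paths are independent and Lemma~\ref{l:tree-mix} applies to $T_v$; then apply that lemma with per-vertex failure probability $\delta/n$, union bound over the $n$ vertices, and absorb the resulting $\log_2(n/(2\delta))$ min-distance into the stated $\ell_0=\log n/c_1$ by decreasing $c_1$ if necessary (your ``rename the constant'' step is exactly the paper's adjustment). The only cosmetic difference is that you re-trace the lemma's internal union bound and invoke Lemma~\ref{lem:Wlemma}, which the paper defers to the proof of Theorem~\ref{thm:main}; neither changes the argument.
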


\begin{proof}
  Note that in the SAW tree rooted at $v$ the value of
  $h_x$ is repeated at some vertices, but there are no repetitions
  along any path to   the root $v$, because this path corresponds to a self-avoiding
  walk. Therefore 
  Lemma~\ref{l:tree-mix} applies to the SAW tree rooted at
  $v$. To obtain the corollary, apply the result of
  Lemma~\ref{l:tree-mix} with $\delta$ taken to be
  $\fc{\delta}n$ for each vertex $v\in
  V(G)$. The result follows by a union bound over all vertices of $G$, choosing $\delta=o(1)$, and decreasing $c_1>0$ if necessary.
\end{proof}

\subsection{Proof of Theorem~\ref{thm:main}}
\label{s:proofmain}

We use Weitz's approach to approximate counting~\cite{weitz2006counting}.    In brief, to approximate the partition function of a graph in $ \mathcal G_\De$ with arbitrary boundary conditions it suffices to approximate the marginal of any vertex of any graph in $\mathcal G_\De$ with arbitrary boundary conditions by writing the partition function as a telescoping product; see Appendix~\ref{a:saw} for the calculation.  Similarly, given the ability to approximate marginals, one can sample by setting one spin at a time according to its marginal and updating the boundary conditions. Both the algorithms for counting and for sampling are written out in Appendix~\ref{a:alg}.

Recall,
see~\cite{weitz2006counting}
or~\cite[Theorem~2.8]{sinclair2014approximation} that Weitz proved
that for any two-state spin system, strong spatial mixing (SSM) on the
$\Delta$-regular tree implies the existence of an FPTAS on all graphs
of degree at most $\Delta$. In the following, we briefly recall this
algorithm and its analysis. We will see that weaker notion of SSM at
distance $\ell_{0}$ is sufficient to carry out the analysis.

By the SAW tree construction discussed in 
  Section~\ref{s:ssmsawtree}, to compute the marginal distribution of
  the spin at a fixed vertex $v$ it would suffice to compute the
  marginal distribution of the spin at the root of $T_{v}$. Given the
  tree structure, this is a recursive computation. The running time,
  however, is exponential in the depth of the recursion, which could
  be as large as $n$. Weitz observed that one can truncate the tree
  $T_{v}$ at logarithmic depth if  correlations decay exponentially fast in the depth of the tree, as the
  analysis of the recursive computation will not be sensitive to the
  value of the spins at large distances. The running time of the
  recursion on this truncated tree is linear in the size of the tree, which is polynomial in $n$.

To make this precise for the hard-core model, Weitz proved that when $\lambda < \lambda_c(\Delta)$, SSM with rate $\alpha(t) = e^{- \Omega(t)}$ holds on the SAW
tree of a graph of maximum degree $\Delta$. Consequently, to
obtain an $\epsilon/n$-approximate evaluation of the marginal of the root of the SAW tree, one can truncate the SAW tree at
depth $\ell = O(\log(n/\epsilon))$
(see~\cite[Section~5]{weitz2006counting}). The running time
of this algorithm is polynomial in $n$.

\begin{proof}[Proof of Theorem~\ref{thm:main}]
Given the discussion above and Lemma~\ref{lem:Wlemma}, it is clear that Corollary~\ref{cor:ssmdSAW} suffices for verifying
that validity of the algorithm in the setting of the RFIM: for a fixed $0<\delta<1$, we have
SSM at distance $\ell_{0}=\frac{\log n}{c_1}$. 
Hence by taking $\ell'=\max\{\ell,\ell_{0}\}$ and truncating at depth $\ell'$ we obtain the desired polynomial-time
algorithm.
\end{proof}

Lastly, we check that the event on which 
the algorithm
correctly outputs the desired approximation can be
identified in polynomial time as was described in Remark~\ref{rem:check}. 
\begin{proposition}
  \label{prop:check}
  Fix $\epsilon>0$. Given $h$, there is a polynomial time algorithm in
  $1/\epsilon$ and $n$ that determines if the
  output of the algorithm from Theorem~\ref{thm:main} is an
  $\epsilon$-approximation to the partition function of the Ising
  model with external fields $h$.
\end{proposition}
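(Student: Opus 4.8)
The plan is to observe that, for a fixed realization of $h$, the error of the truncated self-avoiding-walk tree algorithm underlying Theorem~\ref{thm:main} is controlled by an \emph{explicit} quantity depending only on $h$, $G$, $\be$ and the truncation depth, and that this quantity is computable in time polynomial in $n$ and $1/\ep$. The certificate the verifier checks will be exactly the hypothesis under which the algorithm's analysis yields $\ep$-relative accuracy; by (the proof of) Corollary~\ref{cor:ssmdSAW} this certificate holds with probability $1-o(1)$, which is the content of Remark~\ref{rem:check}. Recall the algorithm: one fixes a truncation depth $\ell'=O(\log(n/\ep))$ (taken at least $\ell_0=\frac{\log n}{c_1}$ and at least of order $\log(n/\ep)$), orders $V=\{v_1,\dots,v_n\}$, and for each $i$ conditions the spin at $v_i$ on a value $s_i\in\{\pm1\}$ for which the conditional marginal is bounded below by a constant $c=c(\De,\be)>0$ (for instance $s_i=\operatorname{sign}(h_{v_i})$ when $|h_{v_i}|\ge1$, which by the Markov property and the explicit single-site conditionals forces the conditional marginal of $s_i$ at $v_i$ to be at least $(1+e^{2|\be|\De+2})^{-1}$). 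Then $Z_{G,\be,h}=e^{-H_{\be,h}(s)}/\prod_{i=1}^n q_i$ with $s=(s_1,\dots,s_n)$, where $q_i$ is the conditional probability that $\si_{v_i}=s_i$ given $\si_{v_j}=s_j$ for $j<i$ and the original boundary condition; the fully pinned partition function $e^{-H_{\be,h}(s)}$ is computed exactly, and each $q_i$ is the root marginal of the SAW tree $T_{v_i}$ (Lemma~\ref{lem:Wlemma}), approximated by running the recursion on $T_{v_i}$ truncated at depth $\ell'$.

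Next I would show that the step-$i$ truncation error is bounded by a \emph{boundary-condition-independent, computable} quantity. By exactly the chain of reasoning establishing Corollary~\ref{cor:ssmdSAW} — Lemma~\ref{l:bound-by-perc} applied to $T_{v_i}$ with $A=\{\mathrm{root}\}$ and the set $L$ of depth-$\ell'$ truncation leaves in the role of $\partial V$, followed by bounding the percolation probability by a sum over root-to-leaf paths using Lemma~\ref{lem:Mmon} and $\deg\le\De$ — the difference between the truncated value $\hat q_i$ and the true $q_i$ is at most
\begin{equation*}
R_v(\ell')\;:=\;\sum_{w}\;\prod_{u\in\ga_w}M(\De,\,h_{x(u)},\,\be),
\end{equation*}
where the sum is over the (at most $\De^{\ell'}$) nodes $w$ of $T_v$ at depth exactly $\ell'$, $\ga_w$ is the root-to-$w$ path, and $x(u)\in V(G)$ is the vertex whose external field is attached to the tree node $u$. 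Pinned vertices coming from the original boundary condition and from $s_1,\dots,s_{i-1}$ only \emph{block} percolation, so $R_v(\ell')$ does not depend on $i$ or on these boundary conditions; thus one quantity $R_v(\ell')$ per vertex $v$ suffices. Moreover $R_v(\ell')$ satisfies the one-pass recursion $S_u=\sum_{u'}M(\De,h_{x(u')},\be)\,S_{u'}$ over the children $u'$ of $u$ (with $S_w=1$ at the depth-$\ell'$ leaves and $R_v(\ell')=S_{\mathrm{root}}$), and the truncated tree has $\De^{O(\ell')}=(n/\ep)^{O(\log\De)}$ nodes, which is polynomial since $\De$ is fixed; each $M(\De,h_x,\be)$ is given by the closed form~\eqref{eq:M} and evaluable to additive precision $\eta$ in time $\poly(\log(1/\eta))$, so carrying $\poly(\log(n/\ep))$ bits through the recursion determines $R_v(\ell')$ to well within the accuracy needed below. (These are the same finite-precision issues already present in the partition-function algorithm itself.)

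The verifier is then: run the algorithm of Theorem~\ref{thm:main} to obtain $\hat Z$, compute $R_v(\ell')$ for every $v\in V$, and declare $\hat Z$ an $\ep$-approximation iff $\max_{v\in V}R_v(\ell')\le \tfrac{c\,\ep}{4n}$. If this holds then $|\hat q_i-q_i|\le R_{v_i}(\ell')\le\tfrac{c\ep}{4n}\le\tfrac c2$, so $\hat q_i\ge c/2$ (as $q_i\ge c$), hence $|q_i/\hat q_i-1|\le\tfrac2c R_{v_i}(\ell')\le\tfrac{\ep}{2n}$; multiplying the $n$ factors gives $e^{-\ep}\le\hat Z/Z_{G,\be,h}\le e^{\ep}$, so the acceptance condition is a sound certificate. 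Finally, choosing $\ell'$ of order $\log(n/\ep)$ large enough and invoking the proof of Corollary~\ref{cor:ssmdSAW} (under whose high-probability event $R_v(\ell')\le e^{-c_1\ell'}$ for all $v$), we get $R_v(\ell')\le\tfrac{c\ep}{4n}$ for all $v$ simultaneously with probability $1-o(1)$ over $h$, so the verifier accepts with probability $1-o(1)$.

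The main obstacle is the middle step: confirming that Weitz's truncation error is genuinely captured by the \emph{computable} and \emph{boundary-condition-independent} quantity $R_v(\ell')$. This means tracing the disagreement-percolation bound of Lemma~\ref{l:bound-by-perc} with the truncation leaves $L$ as the boundary (and checking, via the monotonicity of two-state root marginals in the effective fields at depth $\ell'$, that whatever value the truncation rule assigns is sandwiched between the all-$+$ and all-$-$ pinnings, so the error is at most the percolation probability), and verifying that the extra pinned vertices entering through the telescoping can only decrease this probability. The remaining conversion of additive per-marginal errors into a multiplicative error on $Z$ is routine, the one point needing care being that the telescoping factors $q_i$ stay bounded away from $0$, which is arranged by the choice of $s_i$.
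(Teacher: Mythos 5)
Your verifier is correct, but it checks a different certificate than the paper does. The paper's proof of Proposition~\ref{prop:check} is purely combinatorial: build each SAW tree $T_v$ only to depth $\ell'=\max\{\ell,\ell_0\}$ (polynomially many nodes since $\Delta$ is fixed) and check, path by path, that every root-to-leaf path has at least half of its vertices with $|h_x|\ge h_0$ --- i.e.\ it verifies exactly the high-probability event of Lemma~\ref{l:tree-mix}/Corollary~\ref{cor:ssmdSAW}, under which the already-proved SSM analysis yields $\epsilon$-accuracy. You instead compute the explicit disagreement-percolation functional $R_v(\ell')=\sum_w\prod_{u\in\gamma_w}M(\Delta,h_{x(u)},\beta)$ on the truncated tree and accept iff $\max_v R_v(\ell')\le O(\epsilon/n)$. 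Your soundness argument is sound: by Lemma~\ref{l:bound-by-perc} with the depth-$\ell'$ nodes as $\partial V$, the pinned vertices (original boundary condition and previously fixed spins) have $T_x=0$ and only block percolation, so $R_v(\ell')$ bounds the truncation error uniformly over the boundary conditions arising in the telescoping; and your completeness follows since the event of Corollary~\ref{cor:ssmdSAW} forces $R_v(\ell')\le e^{-c_1\ell'}$, so the threshold is met w.h.p.\ for $\ell'$ a suitable multiple of $\log(n/\epsilon)$. What your route buys is a sharper, directly quantitative certificate (it accepts whenever the paper's condition holds, and potentially on more instances, and it certifies the error bound itself rather than a sufficient condition for it); what it costs is having to re-trace the boundary-condition-independence of the percolation bound, which the paper's simpler check avoids by piggybacking entirely on the analysis already done for Theorem~\ref{thm:main}. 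Two minor points: there is no need to alter the spin-selection rule --- Algorithm~\ref{a:counting} already fixes $\sigma_i$ to the spin whose estimated marginal is at least $1/2$, which gives the lower bound on the telescoping factors $q_i$ that your additive-to-relative conversion needs --- and your one-pass recursion for $R_v(\ell')$ omits the root factor $M(\Delta,h_v,\beta)$ relative to your definition, which is harmless since $M\le 1$ but should be stated consistently.
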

\begin{proof}
  For each $v\in V$, it takes polynomial time to construct the SAW
  tree $T_{v}$ to depth $\ell'=\max\{\ell,\ell_{0}\}$, where $\ell$ is
  the constant from the proof of Theorem~\ref{thm:main} above. Constructing the
  SAW tree for each $v$ to this depth thus takes polynomial time as
  well. Each tree has polynomially many leaves, and for each leaf to
  check that the path from root to leaf has at least half of its
  vertices with external field at least $h_{0}$ in magnitude takes
  linear time. 
\end{proof}

\section{Random field Ising model on random graphs}
\label{sec:extensions}

This section establishes Theorem~\ref{thm:main2}.
The arguments are similar to those that established
Theorem~\ref{thm:main}, and we focus our exposition on the new
aspects.

A graph drawn from $\mathcal G(n,p)$ (an Erd\H{o}s-R\'enyi random graph)
is defined as a graph on $n$ vertices $\{v_1,\ldots, v_n\}$ where each pair of vertices $\{v_i,v_j\}$, $i\ne j$ forms an edge independently with probability $p$. We will take $p=\De/n$, so that the average degree of a vertex is $p(n-1)\approx \De$. However, the maximum degree of $\mathcal G(n,\De/n)$ is $\Te\pf{\log n}{\log\log n}$ with high probability.

The key observation that allows us to apply our results in this setting is that an Erd\H{o}s-R\'enyi random graph has bounded connective constant with high probability~\cite{sinclair2013spatial,sinclair2017spatial}. 
\begin{lemma}[{\cite[in Proof of Theorem~1.2]{sinclair2013spatial}}]
\label{l:cc}
Let $\De >1$. 
Suppose $G$ is distributed as $\mathcal{G}(n,\Delta/n)$, and let $\gamma,\nu>0$. 
With probability at least $1-n^{-\nu}$, for all $\ell \ge \fc{\nu+2}{\log(1+\gamma/2)}\log n$ and all vertices $v$, the SAW tree $T_v$ satisfies 
\begin{equation*}
|N(v,\ell)| \le [\De(1+\gamma/2)]^\ell, \qquad %\\
\sum_{d=1}^\ell |N(v,d)| \le \fc{\De}{\De-1}[\De(1+\gamma/2)]^\ell.
\end{equation*}
\end{lemma}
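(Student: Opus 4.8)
The plan is to reduce everything to a first-moment estimate for $|N(v,d)|$, pass to a tail bound via Markov's inequality applied to the cumulative count $S_\ell(v):=\sum_{d=1}^\ell|N(v,d)|$, and close with a union bound over vertices and over the scales $\ell\ge\ell^\ast$, where $\ell^\ast:=\frac{(\nu+2)\log n}{\log(1+\gamma/2)}$. The one substantive input is the bound $\E\,|N(v,d)|<\Delta^d$, valid for every vertex $v$ and every $d\ge1$. To see it, observe that a vertex at depth $d$ of the SAW tree $T_v$ is a walk $v=v_0,v_1,\dots,v_d$ in $G$ with $v_0,\dots,v_{d-1}$ distinct and $v_d\ne v_{d-2}$ — this holds both for the genuine self-avoiding nodes of $\hat T_v$ and for the appended cycle-closing leaves, since a leaf is only attached for $v_d\ne v_{d-2}$. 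There are fewer than $n^d$ such sequences starting at $v$, and such a walk occurs in $G\sim\mathcal G(n,\Delta/n)$ exactly when its $d$ edges are present; since $v_0,\dots,v_{d-1}$ are distinct and $v_d\ne v_{d-2}$, these $d$ potential edges are pairwise distinct, hence independent, so this has probability exactly $(\Delta/n)^d$. Thus $\E\,|N(v,d)|<n^d(\Delta/n)^d=\Delta^d$, and summing over $d$ gives $\E\,S_\ell(v)<\sum_{d=1}^\ell\Delta^d<\frac{\Delta}{\Delta-1}\Delta^\ell$, where $\Delta>1$ is used.

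Write $x:=\Delta(1+\gamma/2)$, so that $x^\ell=\Delta^\ell(1+\gamma/2)^\ell$. By Markov's inequality, for each fixed $v$ and $\ell$,
\[
  \P\!\left(S_\ell(v)>x^\ell\right)<\frac{\E\,S_\ell(v)}{x^\ell}<\frac{\Delta}{\Delta-1}\,(1+\gamma/2)^{-\ell}.
\]
On the complementary event $\{S_\ell(v)\le x^\ell\}$, \emph{both} conclusions of the lemma hold at this $(v,\ell)$: we have $|N(v,\ell)|\le S_\ell(v)\le x^\ell$, and also $S_\ell(v)\le x^\ell\le\frac{\Delta}{\Delta-1}x^\ell$. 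It therefore suffices to bound $\P(\exists\,v,\ \exists\,\ell\ge\ell^\ast:\ S_\ell(v)>x^\ell)$. Since $(1+\gamma/2)^{\ell^\ast}=n^{\nu+2}$, for $\ell\ge\ell^\ast$ we get $(1+\gamma/2)^{-\ell}\le n^{-(\nu+2)}$, and the geometric tail satisfies $\sum_{\ell\ge\ell^\ast}(1+\gamma/2)^{-\ell}\le\frac{\gamma+2}{\gamma}\,n^{-(\nu+2)}$. A union bound over the $n$ vertices then bounds the failure probability by $\frac{\Delta(\gamma+2)}{(\Delta-1)\gamma}\,n^{-(\nu+1)}$, which is at most $n^{-\nu}$ for all sufficiently large $n$ (the lemma being applied in the limit $n\to\infty$). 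This gives the claim.

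The step requiring the most care — and the reason the hypothesis restricts to $\ell\ge\ell^\ast$ — is the cumulative bound. One cannot obtain it by establishing $|N(v,d)|\le x^d$ separately for each $d=1,\dots,\ell$ and summing, because that per-scale bound genuinely fails for small $d$: for instance $|N(v,1)|=\deg_G(v)$, and $\mathcal G(n,\Delta/n)$ has maximum degree $\Theta(\log n/\log\log n)$, so with high probability some vertex violates $|N(v,1)|\le x=\Delta(1+\gamma/2)$. More generally, the per-pair Markov estimate $(1+\gamma/2)^{-\ell}$ is only small enough to absorb a union bound over all $n$ vertices once $\ell=\Omega(\log n)$. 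The fix is to tail-bound the entire sum $S_\ell(v)$ in one stroke: its expectation is dominated by its top term $\Theta(\Delta^\ell)$, so the small-$d$ contributions — though individually out of control with high probability — are negligible on average, which is precisely why the correct constant is $\frac{\Delta}{\Delta-1}$, with no dependence on the maximum degree. A related minor point: the first-moment count should be run directly on the tree-count $|N(v,d)|$ with the appended leaves included; bounding instead the number of depth-$d$ leaves by $\Delta_{\max}$ times the number of length-$(d-1)$ self-avoiding walks would lose an unbounded factor $\Delta_{\max}$.
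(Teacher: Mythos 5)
Your proof is correct. Note that the paper itself does not prove this lemma --- it is quoted from the proof of Theorem~1.2 in the cited reference \cite{sinclair2013spatial} --- so your self-contained argument is the relevant comparison, and it is essentially the standard one used there: a first-moment bound $\E|N(v,d)|\le \Delta^d$ for the number of depth-$d$ nodes of the SAW tree, Markov's inequality applied to the cumulative count $\sum_{d\le \ell}|N(v,d)|$, and a union bound over vertices and over scales $\ell\ge \frac{(\nu+2)\log n}{\log(1+\gamma/2)}$, with the restriction to large $\ell$ exactly absorbing the factor $n$ from the union over vertices. Two points you handle well and which are worth keeping explicit: (i) depth-$d$ nodes of Weitz's tree $T_v$ are not only self-avoiding walks but also the appended cycle-closing leaves, and your characterization (walks with $v_0,\dots,v_{d-1}$ distinct and $v_d\ne v_{d-2}$, whose $d$ edges are then pairwise distinct and hence independent in $\mathcal G(n,\Delta/n)$) covers both, giving the clean bound $\Delta^d$ with no stray maximum-degree factor; (ii) bounding the sum in one stroke rather than level by level is necessary, since the per-level bound fails for small $d$. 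The only (minor) discrepancy with the statement as quoted is the constant: your union bound yields failure probability $\frac{\Delta(2+\gamma)}{(\Delta-1)\gamma}\,n^{-(\nu+1)}$, which is $\le n^{-\nu}$ only for $n$ larger than a constant depending on $\Delta,\gamma$; since the lemma is only invoked for high-probability ($1-o(1)$) statements, this is harmless, but you should state the bound in that form rather than assert $1-n^{-\nu}$ for all $n$.
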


The following lemma is the analogue of Corollary~\ref{cor:ssmdSAW}. The key additional idea is to classify a vertex as bad if its degree is large, and that large degrees occur with low probability in $\mathcal{G}(n,p)$. Note that degrees of vertices are not independent, so we instead use the Chernoff–Hoeffding inequality on the edge count.
\begin{lemma}\label{l:ER}
 Let $\De>1$ and $G\sim \mathcal G(n,\Delta/n)$ be a random graph on $V$. There are constants $c_3, c_4, c_5$ such that the following hold. Fix a vertex $v\in V$. Let $c_1,c_2$ be any large enough constants. Let $h_0$ be such that $M(e^{c_2c_3}\Delta, h_0, \beta) \le \fc{e^{-c_1}}{c_4\Delta^2}$ and $\P(h_u<h_0)\le \pf{1}{2\De}^{c_2c_5}$.
 Then with probability $\ge 1-\de$ over $G$ and the realization of the external field $h$, the following hold.
 \begin{enumerate}
     \item The SAW tree $T_v$ at $v$ satisfies strong spatial mixing with rate $\al(\ell) = e^{-c_1\ell/2}$ for $v$ and for min-distance $\ell_0=\fc{\log\pf{2}{\de}}{c_2}$. 
     \item For all $\ell \ge \ell_0$, on the SAW tree $T_v$, $\sum_{d=1}^{\ell} |N(v,d)|\le (c_4\De)^{\ell/2}$. 
 \end{enumerate}
This holds for the SAW trees at all vertices with the same rate and min-distance $\ell_0=\fc{\log\pf{2n}{\de}}{c_2}$. 
\end{lemma}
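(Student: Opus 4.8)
\textbf{Proof plan for Lemma~\ref{l:ER}.}

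The plan is to mimic the proof of Lemma~\ref{l:tree-mix} and Corollary~\ref{cor:ssmdSAW}, but with a two-sided notion of a "bad" vertex on a path: a vertex $u$ on a root-path of the SAW tree $T_v$ is bad either if its external field is small, $|h_u| < h_0$, or if its degree in $G$ is large, say $\deg_G(u) > e^{c_2 c_3}\Delta$. The first key point is that each of these two events is unlikely: smallness of the field has probability at most $\pf{1}{2\Delta}^{c_2 c_5}$ by hypothesis, and a standard Chernoff bound on the number of edges incident to a fixed vertex (or rather on the degree, recalling $\deg_G(u) \sim \mathrm{Bin}(n-1,\Delta/n)$) shows $\P(\deg_G(u) > e^{c_2 c_3}\Delta)$ is also at most $\pf{1}{2\Delta}^{c_2 c_5}$ after choosing $c_3$ large relative to $c_5$; together these give that a vertex is bad with probability at most $\pf{1}{2\Delta}^{c_2 c_5'}$ for a slightly smaller exponent. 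The degrees along a self-avoiding path are not independent, but as in the remark before the lemma one runs Chernoff--Hoeffding on the underlying \emph{edge} indicators, which \emph{are} independent; revealing edges is monotone enough that the count of bad vertices along a fixed path of length $\ell$ is stochastically dominated by a binomial, so the Chernoff--Hoeffding bound from~\eqref{e:CH} applies essentially verbatim.

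Next, condition on a fixed path $\ga_{wv}$ from a leaf-region vertex $w$ at distance $\ell$ to the root $v$ having at least $\ell/2$ good vertices. Along such a path, for each good vertex $u$ we have $\deg_G(u) \le e^{c_2 c_3}\Delta$ and $|h_u|\ge h_0$, so by Lemma~\ref{lem:Mmon} and the hypothesis $M(e^{c_2 c_3}\Delta, h_0,\beta)\le \fc{e^{-c_1}}{c_4\Delta^2}$ we get $M(\deg(u),h_u,\beta) \le \fc{e^{-c_1}}{c_4\Delta^2}$; for the (at most $\ell/2$) bad vertices we bound $M \le 1$ trivially. Then, exactly as in Lemma~\ref{l:tree-mix}, Lemma~\ref{l:bound-by-perc} gives $|p_v^{\si_\La} - p_v^{\tau_\La}| \le \prod_{u\in\ga_{wv}} M(\deg(u),h_u,\beta) \le \pf{e^{-c_1}}{c_4\Delta^2}^{\ell/2} = \fc{e^{-c_1\ell/2}}{(c_4\Delta^2)^{\ell/2}}$ for a single disagreeing vertex, and summing over the at most $|N(v,\ell)|$ choices of $w$ (using $|N(v,\ell)| \le (e^{c_2c_3}\Delta)^\ell$ on the SAW tree, which holds deterministically once all relevant degrees are bounded, or with high probability via Lemma~\ref{l:cc}) leaves a factor that is still summably small in $\ell$ provided $c_4$ was chosen large enough to absorb $e^{c_2 c_3}$. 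A union bound over disagreements in $\La'$ and over $\ell \ge \ell_0$, as in the last display of the proof of Lemma~\ref{l:tree-mix}, then yields item~(1) with rate $\alpha(\ell) = e^{-c_1\ell/2}$ and min-distance $\ell_0 = \fc{\log(2/\delta)}{c_2}$; the $2^{-\ell}$-type slack in~\eqref{e:CH} is what controls the union bound over $\ell$ and fixes the form of $\ell_0$.

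For item~(2), I would again use that all vertices within distance $\ell$ of $v$ in $T_v$ that lie on some explored path have bounded degree with high probability: on the complement of the bad event, $\sum_{d=1}^\ell |N(v,d)|$ on the SAW tree is at most a geometric-type sum $\le \fc{\Delta'}{\Delta'-1}(\Delta')^\ell$ with $\Delta' = e^{c_2c_3}\Delta$, which is at most $(c_4\Delta)^{\ell/2}$ once $c_4$ is taken large enough relative to $e^{2 c_2 c_3}$; alternatively one invokes Lemma~\ref{l:cc} directly with $\gamma$ a fixed constant and $\nu$ chosen so that $n^{-\nu}$ is negligible, absorbing the resulting $[\Delta(1+\gamma/2)]^\ell$ into $(c_4\Delta)^{\ell/2}$ by enlarging $c_4$. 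Finally, the "all vertices" statement follows by replacing $\delta$ with $\delta/n$ in the per-vertex bound and union-bounding over the $n$ vertices, which turns $\ell_0$ into $\fc{\log(2n/\delta)}{c_2}$, exactly as in the proof of Corollary~\ref{cor:ssmdSAW}.

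The main obstacle I anticipate is the dependence structure: unlike in Lemma~\ref{l:tree-mix}, where the fields along a path are genuinely independent, here the SAW tree $T_v$ is itself a random object built from $G$, so "the set of paths of length $\ell$", the degrees of vertices on them, and the fields are all entangled. The right way to handle this is to reveal information in the correct order --- first fix the combinatorial path in the \emph{complete} graph, then reveal the edge indicators determining which vertices actually lie on it and their degrees, then reveal the fields --- and to note that the events we need (at least half the vertices on a realized path are good) only become more likely under conditioning that shortens or removes paths. Making this monotonicity/stochastic-domination step rigorous, rather than the calculations with $M$, is where the real care is needed; everything downstream is a routine repetition of the bounded-degree argument with $\Delta$ replaced by $e^{c_2 c_3}\Delta$ and constants tracked through.
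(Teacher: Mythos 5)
Your overall architecture matches the paper's: classify a vertex as problematic if its field is small or its degree exceeds $e^{c_2c_3}\De$, show that along every relevant SAW-tree path at least half the vertices are unproblematic, run the disagreement-percolation product bound with $M(e^{c_2c_3}\De,h_0,\be)$ on the good vertices, control the neighborhood growth, and union bound over vertices with $\de/n$. But there is a genuine gap at the central probabilistic step, and it is exactly the step you flag at the end as unresolved. Your plan rests on the claim that the number of bad vertices along a fixed SAW-tree path is stochastically dominated by a binomial, justified by a vague monotonicity under "conditioning that shortens or removes paths." This is not right as stated: the events $\{\deg_G(u_i)>e^{c_2c_3}\De\}$ are increasing events in the edge indicators and hence positively correlated (FKG), and conditioning on the vertex sequence actually being a path in $G$ is itself an increasing event that pushes degrees up; moreover the collection of length-$\ell$ paths in $T_v$ is itself a function of $G$, so one cannot fix "a path of the SAW tree" and then pretend the per-vertex badness probabilities are the unconditional ones. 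The paper closes this by a different accounting that you do not supply: it union bounds over all $n^\ell$ \emph{vertex sequences in the complete graph}, and for each fixed sequence bounds the probability of being a path \emph{and} bad by $\pf{\De}{n}^\ell$ (the cost of the $\ell$ path edges, which beats the $n^\ell$ count) times the probability of badness, where the high-degree event is controlled not vertex-by-vertex but by a Chernoff--Hoeffding bound on the \emph{edge count} between a fixed $\ce{\ell/4}$-subset $S$ of the sequence and $V\setminus S$ (which requires $\Omega(e^{c_2c_3}\De\,\ell)$ edges), followed by a union over the at most $2^\ell$ choices of $S$; the field event factors off by independence. Your "reveal edges after fixing the combinatorial path" remark gestures in this direction, but without the $n^\ell\times\pf{\De}{n}^\ell$ bookkeeping and the subset edge-count bound the argument does not go through.

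Two smaller points. First, your claim that $|N(v,\ell)|\le (e^{c_2c_3}\De)^\ell$ "holds deterministically once all relevant degrees are bounded" is unavailable: your good event only guarantees that at most half the vertices on each path have large degree, not all of them, so you must use Lemma~\ref{l:cc} (as the paper does) to get $\sum_{d=1}^\ell|N(v,d)|\le (c_4\De)^{\ell/2}$ with probability $1-\de/2$; this is also where item (2) of the lemma comes from. Second, your fallback of "choosing $c_4$ large enough to absorb $e^{c_2c_3}$" breaks the quantifier order of the statement ($c_3,c_4,c_5$ are fixed before the "large enough" $c_1,c_2$, and $c_4$ appears in the hypothesis on $M$); the paper avoids any such dependence because the $c_4^{\ell/2}$ from the neighborhood bound cancels exactly against the $c_4^{-\ell/2}$ coming from the product $\pf{e^{-c_1}}{c_4\De^2}^{\ell/2}$, leaving $e^{-c_1\ell/2}\De^{-\ell/2}\le e^{-c_1\ell/2}$.
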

\begin{proof}
Call a path $v_0,\ldots, v_\ell$ on the SAW tree starting at $v_0=v$ \emph{bad} if one of the following holds:
\begin{enumerate}
\item
At least $\fc 14\ell$ of the vertices $v_0,\ldots, v_{\ell-1}$ have degree  greater than $c_3\Delta$.
\item
At least $\fc 14\ell$ of the values $|h_{v_i}|$ satisfy $|h_{v_i}|> h_0$.
\end{enumerate}
The first step in the proof is to rule out the existence of bad paths with high probability by a union bound argument. To this end, we first bound the probability (over the randomness in $G$ and $h$) that a fixed sequence $v_0,\ldots, v_\ell$ of vertices is a bad path. Since each edge is included in $G$ with probability $\fc{\De}n$, the probability of this fixed path %it 
being a path in the SAW tree is $\pf{\De}{n}^\ell$. 

Next we bound the probability of event (i). Given a fixed subset $S$ of $\{v_0,\ldots, v_{\ell-1}\}$ with $\ce{\ell/4}$ vertices, we bound the probability that all its vertices have degree greater than $e^{c_2c_3}\Delta$. For this to happen, there must be at least $\fc{e^{c_2c_3}\Delta}2\cdot \fc{\ell}{4}$ edges in $S\times (V\bs S)$, which has cardinality $O(\ell n)$. By the Chernoff-Hoeffding bound, the probability of this is %$e^{-\Om(\ell c_3^2\De)}$ 
$e^{-\Omega(\Delta \ell c_2c_3)}$
for large enough $c_3$. By a union bound over appropriate subsets of $S$ (less than $2^\ell$ in number), the probability of (i) is still $e^{-\Omega(\Delta \ell c_2c_3)}$.

The probability of event (ii) is bounded exactly as in Lemma~\ref{l:tree-mix}: letting $p=\pf{1}{2\De}^{c_2c_5}$, it is bounded by %$2p^{\rc 2\ell} \le \pf{2e^{-c_2/2}}{\De}^{c_5\ell}$
$p^{O(\ell)} \le (2\De)^{-\Om(c_2c_5\ell)}$. Thus, the probability of a fixed sequence $v_0,\ldots, v_\ell$  being a bad path is 
\begin{align*}
\pf{\De}{n}^\ell \cdot \pa{e^{-\Omega(\Delta \ell c_2c_3)} +  (2\De)^{-\Om(c_2c_5\ell)}}\le \rc2
\prc{n}^{\ell}\cdot e^{-c_2\ell }.
\end{align*}
for large enough $c_3,c_5$.

By a union bound over possible paths of length $\ell$, of which there are most $n^\ell$, the probability that a bad path of length $\ell\ge \ell_0$ exists is at most $e^{-c_2\ell_0 }$. This is at 
most $\fc{\de}2$ when $\ell_0=\log\pf2{\de}/c_2$. 

Next, observe that we can choose $c_4$ sufficiently large so that by Lemma~\ref{l:cc}, the probability that $\sum_{d=1}^\ell |N(v,d)|\le (c_4 \De)^{\ell/2}$ for all $\ell\ge \ell_0$ 
%$|N(v,\ell)|\le (c_4 \De)^{\ell/2}$ 
is at least $1-\fc{\de}2$. (We apply Lemma~\ref{l:cc} directly when $\De\ge 2$; otherwise to avoid the $\rc{\De-1}$ factor, we note that $\sum_{d=1}^\ell |N(v,\ell)|$ is stochastically dominated by its value when $\De=2$.) 
Let $\mathcal{E}$ be the event that there are no bad paths of length $\ell\ge \ell_0$, and %$|N(v,\ell)|\le (c_4 \De)^{\ell/2}$ 
$\sum_{d=1}^\ell |N(v,d)| \le (c_4 \De)^{\ell/2}$ 
for each $\ell\ge \ell_0$. By a union bound, $\mathcal{E}$ has probability at least $1-\de$.

If a path from $v$ to $w$ is not bad, then letting $\ell = d(v,w)$, at least $\fc 12 \ell$ of the vertices $u$ on the path satisfy $\deg(u)\le e^{c_2c_3}\Delta$ and $|h_u|\le h_0$. For any such $u$,  $M(\deg(u), h_0, \beta)\le \fc{e^{-c_1}}{c_4\Delta^2}$ by Lemmas~\ref{l:M} and~\ref{lem:Mmon}.
As in the proof of of Lemma~\ref{l:tree-mix}, %before, 
let $\ga_{wv}$ denote the unique path from $w$ to $v$.
Then, we have that if $\si_\La$, $\tau_\La$ are boundary conditions differing only at $w$, then 
\begin{align*}
\P(\si_v\ne \tau_v|\si_\La, \tau_\La) &= %...
\prod_{u\in \ga_{wv}}M(\deg(u), h_{v_k},\be)\le \pf{e^{-c_1}}{c_4\Delta^2}^{\ell/2} = \fc{e^{-\ell c_1/2}}{c_4^{\ell/2} \De^\ell}. 
\end{align*}
Fix $\ell \ge \ell_0$. On the event $\mathcal{E}$, 
by changing the vertices at distance $\ell$ one at a time, we have
\begin{equation*}
|p_v^{\si_\La} - p_v^{\tau_\La}| \le \sum_{w:d(v,w)=\ell} \sum_{u\in \ga_{wv}} M(\deg(u), h_u,\be)
\le (c_4 \De)^{\ell/2} \cdot \fc{e^{-\ell c_1/2}}{c_4^{\ell/2} \De^\ell} \le e^{-\ell c_1/2},
\end{equation*}
and we obtain the same conclusion for all $v$ on the event that no bad paths exist starting from any $v$. 
Finally, by replacing $\de$ by $\fc{\de}n$ and union-bounding over all vertices, there are no bad paths in the SAW tree at $v$ for each vertex $v$. 
\end{proof}

\begin{proof}[Proof of Theorem~\ref{thm:main2}]
Theorem~\ref{thm:main2} follows from Lemma~\ref{l:ER} after noting the high-probability bound on the neighborhood of a vertex $v$ given by Lemma~\ref{l:cc}.  The recursive computation of the marginals on the SAW tree still runs in polynomial time  because the size of the $\ell'$-neighborhood of $v$ in the SAW tree $T_v$ is  $\sum_{d=1}^{\ell'}|N(v,d)| \le (c_4\De)^{\ell'/2}$, which is polynomial in all parameters.
\end{proof}

\section{Non-uniform spatial mixing on infinite graphs}
\label{sec:NUSM}\label{s:non-unif}
The following non-uniform spatial mixing result generalizes~\cite[Theorem~6]{camia2018note}. In this section we work in the context of infinite graphs; we always fix boundary conditions on sets $B$ such that $V\setminus B$ is finite.

\begin{theorem}[{cf.\ \cite[Theorem~6]{camia2018note}}]\label{t:cjn}
Consider the RFIM with IID Gaussian external fields. 
Let $c_2>0$. There exists $c_1(\De, \be, c_2)$ such that for $\Var(h_x)\ge c_1(\De, \be,c_2)$, for any $A$ and $B$ such that $V\bs B$ is finite, for almost all realizations $h$,
\begin{align*}
\sup_{\eta, \xi \in \{\pm 1\}^{B}}
d_{TV}(p^{\tau\wedge \eta}_{\be, h}(\si_A\in \cdot ),p^{\tau \wedge \xi}_{\be, h}(\si_A\in \cdot))
&\le 
\sum_{x\in \pl A, y\in \pl B} c_3(x,h) e^{-c_2d(x,y)}. 
\end{align*}
where 
\begin{align*}
(\tau\wedge\eta)(x)
&= \begin{cases}
\tau(x), & x\in V\setminus B\\
\eta(x), & \text{otherwise,}
\end{cases}
\end{align*}
and similarly for $\xi$.
Here, for a set $A\sub V$, $\pl A \subset V\setminus A$ denotes the subset of $A$ whose neighbors are not all contained in $A$.
\end{theorem}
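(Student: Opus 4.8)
The plan is to reduce Theorem~\ref{t:cjn} to the disagreement-percolation bound of Lemma~\ref{l:bound-by-perc} and then to control the resulting percolation probability by a Borel--Cantelli / summability argument that exploits the large variance of the Gaussian fields. First I would apply Lemma~\ref{l:bound-by-perc} (which holds on finite graphs; here $V\setminus B$ is finite, so working on the finite graph induced by $V\setminus B$ together with its boundary $B$ is legitimate) to get, for the two boundary conditions $\tau\wedge\eta$ and $\tau\wedge\xi$ that agree off $B$,
\begin{equation*}
d_{TV}(p^{\tau\wedge\eta}_{\be,h}(\si_A\in\cdot),p^{\tau\wedge\xi}_{\be,h}(\si_A\in\cdot))\le P_p(\pl B\lra A),
\end{equation*}
where $p_x=M(\deg(x),h_x,\be)\le M(\De,h_x,\be)$ by Lemma~\ref{lem:Mmon}. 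Since any disagreement path from $A$ to $\pl B$ must pass through a vertex of $\pl A$ and a vertex of $\pl B$, a union bound gives $P_p(\pl B\lra A)\le\sum_{x\in\pl A,\,y\in\pl B}P_p(x\lra y)$, and on any self-avoiding path $\ga$ from $x$ to $y$ in the graph the product $\prod_{u\in\ga}M(\De,h_u,\be)$ bounds the event that $\ga$ is open. So the whole theorem reduces to showing that, for almost every realization of $h$, there is a finite $c_3(x,h)$ with $\sum_{\ga:x\to y}\prod_{u\in\ga}M(\De,h_u,\be)\le c_3(x,h)\,e^{-c_2 d(x,y)}$ for all $y\in\pl B$ simultaneously.

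The core estimate is the per-vertex one: by Lemma~\ref{l:M}, if $|h_u|\ge h_0:=|\be|\De+\tfrac12\log(1/\ep)$ then $M(\De,h_u,\be)<\ep$, and we will take $\ep$ small enough (depending on $\De,\be,c_2$) that $\ep<(2\De\, e^{c_2})^{-2}$, say, so that a path in which a definite fraction of vertices are ``good'' (have $|h_u|\ge h_0$) is exponentially suppressed even after summing over the roughly $\De^\ell$ self-avoiding paths of length $\ell$. Choosing $\Var(h_x)=c_1(\De,\be,c_2)$ large forces $\P(|h_x|<h_0)\le q$ for $q$ as small as we like (using $\P_{h\sim N(0,\si^2)}(|h|\le c)\le\sqrt{2/\pi}\,c/\si$ as in the second remark after Theorem~\ref{thm:main}), in particular $q<(2\De)^{-K}$ for a large constant $K$. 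The key combinatorial step is then: for a fixed starting vertex $x$, the expected number of self-avoiding paths $\ga$ of length $\ell$ from $x$ with more than $\ell/2$ ``bad'' vertices is at most $\De^\ell\binom{\ell}{\ell/2}q^{\ell/2}\le(2\De)^\ell q^{\ell/2}=(2\De\,q^{1/2})^\ell$, which is summable over $\ell$ when $q<(2\De)^{-2}$; hence by Borel--Cantelli, for a.e.\ $h$ there is a (random, finite) $L(x,h)$ such that every self-avoiding path of length $\ell\ge L(x,h)$ from $x$ has at least $\ell/2$ good vertices. On this full-measure event, for $d(x,y)=\ell\ge L(x,h)$,
\begin{equation*}
\sum_{\ga:x\to y,\,|\ga|=\ell}\prod_{u\in\ga}M(\De,h_u,\be)\le\De^\ell\cdot\ep^{\ell/2}\le e^{-c_2\ell},
\end{equation*}
and for the finitely many $y$ with $d(x,y)<L(x,h)$ we simply absorb everything into the constant, defining $c_3(x,h):=\max\{1,\ e^{c_2 L(x,h)}\}$ (or sum the trivial bound $P_p(x\lra y)\le 1$ against $e^{c_2 d(x,y)}$ over those finitely many $y$). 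Summing over $x\in\pl A$ and $y\in\pl B$ yields the claimed bound.

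A few points require care. The statement is ``for almost all $h$'' with the \emph{same} $h$ working for all choices of $A,B$ (with $V\setminus B$ finite) and all $\eta,\xi$; this is why I set up the Borel--Cantelli event per starting vertex $x$ — there are only countably many vertices, so intersecting over all $x$ still gives a full-measure event, and once that event holds the bound is uniform in $A,B,\eta,\xi$ because $\pl A$ and $\pl B$ are just subsets of $V$. I also need $\pl A$ and $\pl B$ to be finite for the double sum to make literal sense as a finite sum: $\pl B\subseteq B^c$-neighborhood is finite since $V\setminus B$ is finite, but $\pl A$ need not be — if $A$ is infinite this is an infinite sum, which is fine provided it converges; in the intended application $A$ is finite, and one can either restrict to that case or note the sum still converges term-by-term on the good event. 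The main obstacle I anticipate is precisely this uniformity/measure-theoretic bookkeeping — making sure the single null set is independent of $(A,B,\eta,\xi)$ — together with pinning down the dependence $c_1(\De,\be,c_2)$ so that the exponent after summing over $\sim\De^\ell$ paths is still $-c_2\ell$ rather than something smaller; the probabilistic content is a routine repackaging of the Chernoff--Hoeffding estimate already used in Lemma~\ref{l:tree-mix}, now summed over $\ell$ rather than applied at a single scale.
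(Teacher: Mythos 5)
Your proposal is correct, and it follows the paper's overall skeleton (Lemma~\ref{l:bound-by-perc}, then a union bound over pairs $x\in\pl A$, $y\in\pl B$, then control of $P_p(x\lra y)$ via ``most fields along a path are large'' plus a count of at most $\De^j$ paths), but it diverges in how the randomness of $h$ is converted into an almost-sure statement. The paper first proves an \emph{annealed} estimate (its Lemma~\ref{l:6}): averaging over $h$, $\ol P_p(x\lra y)\le 4e^{-c_2'd(x,y)}$ with a boosted rate $c_2'=2c_2+\log\De$, and then applies Fubini--Tonelli to $\sum_{y}e^{c_2 d(x,y)}\ol P_p(x\lra y)<\iy$ to conclude that for a.e.\ $h$ the quenched sum $\sum_y e^{c_2 d(x,y)}P_p(x\lra y)$ is finite; the constant $c_3(x,h)$ is simply this finite quantity, and no explicit good event for $h$ is ever constructed. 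You instead prove a \emph{quenched structural} statement: by a first-moment bound of $(2\De q^{1/2})^\ell$ on the number of self-avoiding paths of length $\ell$ from $x$ with many small fields, Borel--Cantelli gives an a.s.\ finite radius $L(x,h)$ beyond which every such path has at least half its fields large, and then the percolation bound is deterministic on that event, with $c_3(x,h)=e^{c_2L(x,h)}$ (intersecting over the countably many $x$). The paper's route is shorter and extracts $c_3(x,h)$ for free from the convergent integral; yours makes the single full-measure event and its uniformity over all $A,B,\eta,\xi$ completely explicit, at the cost of the extra Borel--Cantelli bookkeeping. One presentational point: in your displayed estimate you sum only over paths of length exactly $\ell=d(x,y)$, whereas $P_p(x\lra y)$ requires a union bound over self-avoiding paths of every length $j\ge d(x,y)$ (the paper does this geometric sum explicitly in Lemma~\ref{l:6}); since on your good event all paths of length $j\ge L(x,h)$ are controlled and your choice $\ep<(2\De e^{c_2})^{-2}$ already leaves room for the factor $\sum_{j\ge\ell}2^{-j}$, this is a one-line fix rather than a gap.
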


Recall the inhomogenous site percolation processes $P_p$ with $p_x = M(\deg(x),h_x,\beta)$ introduced in Lemma~\ref{l:bound-by-perc}. 
\begin{lemma}[{cf. \cite[Lemma 6]{camia2018note}}]
\label{l:6}
Consider the measure $P_p$ averaged over the randomness in $h$,  $\ol P_p(\cdot) = \int_{\R^V}P_p(\cdot) \P(dh)$ where $\P$ is the law of $h$.
Let $c_2>\log 2$. There exists $h_0=h_0(\De, \be,c_2)$ so that when $\P(|h_u|<h_0)\le p:=\fc{e^{-2c_2}}{4\De^2}$, 
\begin{align*}
\ol P_p(x\lra y)
&\le% \sum_{j=d}^\iy \De^j \fc{2e^{-c_2j}}{\De^j} \le
 4e^{-c_2d(x,y)}.
\end{align*}
\end{lemma}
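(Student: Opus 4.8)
\textbf{Proof proposal for Lemma~\ref{l:6}.}

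The plan is to bound the probability that there is an open path from $x$ to $y$ in the averaged site-percolation measure $\ol P_p$ by a union bound over paths, exploiting independence of the $h_u$'s to factor the expectation over a fixed path. First, I would write $\ol P_p(x\lra y)\le \sum_{\gamma} \ol P_p(\gamma\text{ open})$, where the sum ranges over all self-avoiding paths $\gamma=(x=u_0,u_1,\dots,u_\ell=y)$ with $\ell=|\gamma|\ge d(x,y)$, and $\gamma$ open means $T_{u_i}=1$ for all $i$. Since $p_{u_i}=M(\deg(u_i),h_{u_i},\beta)$ and the $h_{u_i}$ are independent across the distinct vertices of a self-avoiding path, and since the $T_{u_i}$ are conditionally independent given $h$, we get $\ol P_p(\gamma\text{ open})=\prod_{i=0}^\ell \E_h[M(\deg(u_i),h_{u_i},\beta)]\le q^{\ell+1}$, where $q:=\sup_{u}\E_h[M(\deg(u),h_u,\beta)]$ and I use Lemma~\ref{lem:Mmon} to replace $\deg(u_i)$ by $\Delta$.

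The key estimate is to bound $q$. Splitting on whether $|h_u|<h_0$: on the event $|h_u|\ge h_0$ we have $M(\Delta,h_u,\beta)\le M(\Delta,h_0,\beta)$ (using Lemma~\ref{l:M} / the monotonicity of $M$ in $|h|$), which we choose so that $M(\Delta,h_0,\beta)\le \fc{e^{-2c_2}}{4\De^2}$ by taking $h_0=h_0(\De,\be,c_2)$ large via Lemma~\ref{l:M}; on the complementary event, which has probability at most $p=\fc{e^{-2c_2}}{4\De^2}$ by hypothesis, we bound $M\le 1$ trivially. Hence $q\le M(\Delta,h_0,\beta)+p\le \fc{e^{-2c_2}}{2\De^2}$. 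Now the number of self-avoiding paths of length $\ell$ from $x$ is at most $\Delta^\ell$, so
\begin{align*}
\ol P_p(x\lra y)
&\le \sum_{\ell\ge d(x,y)} \Delta^\ell\, q^{\ell+1}
\le q\sum_{\ell\ge d(x,y)} \pf{e^{-2c_2}}{2\De}^\ell
\le 2q\pf{e^{-2c_2}}{2\De}^{d(x,y)}
\le 4e^{-2c_2 d(x,y)},
\end{align*}
where the geometric series is summable because $\fc{e^{-2c_2}}{2\De}<\rc2<1$ (here $c_2>\log 2$ is actually more than enough; even $c_2>0$ works once $\Delta\ge 1$), and I absorbed constants generously into the final $4$. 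Since $2c_2\ge c_2$ this gives the claimed bound $4e^{-c_2 d(x,y)}$; in fact one gets the stronger rate $2c_2$, but the statement only asks for $c_2$.

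The main obstacle — really the only subtlety — is making sure the independence used in factoring $\ol P_p(\gamma\text{ open})$ over the vertices of $\gamma$ is legitimate: this requires that $\gamma$ visit each vertex of the underlying graph at most once so that the fields $h_{u_i}$ along it are genuinely independent. On the original graph $G$ a minimal or enumerated path need not be self-avoiding, but any open path contains a self-avoiding open sub-path with the same endpoints and length at least $d(x,y)$, so restricting the union bound to self-avoiding $\gamma$ loses nothing; alternatively, since the percolation in Lemma~\ref{l:bound-by-perc} is naturally run on (or dominated by one on) the SAW tree, paths there are automatically self-avoiding and the repeated field values occur only on disjoint branches. I would state this reduction explicitly. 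Everything else is the routine geometric-series bookkeeping above, and the choice of $h_0$ is exactly the one furnished by Lemma~\ref{l:M} applied with $\ep$ of order $e^{-2c_2}\De^{-2}$.
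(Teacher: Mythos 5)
Your proof is correct, but it takes a genuinely different route at the key step. Both arguments share the same skeleton---a union bound over paths from $x$ to $y$ followed by a geometric series over path lengths---but the paper bounds the averaged probability that a fixed path $\gamma$ of length $\ell$ is open by conditioning on a Chernoff--Hoeffding event (at least half of the fields along $\gamma$ exceed $h_0$ in absolute value, exactly as in \eqref{e:CH}), paying $\left(e^{-c_2}/\De\right)^{\ell}$ for the bad event and $M(\De,h_0,\be)^{\ell/2}$ on the good event, whereas you use independence of the fields along a self-avoiding path to factor $\ol P_p(\gamma\text{ open})$ into a product of per-vertex annealed probabilities bounded by $\E_h[M(\De,h_u,\be)]\le M(\De,h_0,\be)+p$. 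Your route is shorter, avoids the concentration step, yields the stronger decay rate $2c_2$, and (as you note) needs only $c_2>0$ rather than $c_2>\log 2$, because your per-vertex bound beats the path entropy $\De^\ell$ by a factor $(2\De)^{-\ell}$ instead of cancelling it exactly. What the paper's quenched-style computation buys is consistency with the rest of the argument: it recycles the estimate \eqref{e:CH} from Lemma~\ref{l:tree-mix}, where an annealed bound would not suffice since one needs the event for the given realization of $h$; for Lemma~\ref{l:6}, which concerns only the annealed measure $\ol P_p$, your first-moment computation is all that is required. Your explicit reduction to self-avoiding paths (loop erasure, so the fields being multiplied are genuinely independent) addresses a point the paper glosses over but which the factorization does require, so it is worth stating, as you do.
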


\begin{proof}
Choose $h_0 = |\be| \De + \log(\De) + c_2$, so by Lemma~\ref{l:M}, $M(\De, h_0, \be) < \fc{e^{-2c_2}}{\De^2}$.  
Consider a path $\ga$ from $x$ to $y$ of length $\ell: = d(x,y)$. Then 
\begin{align}\label{e:avg-perc}
\ol P_p(\{\forall u\in \ga, S_u = 1 \})
&= \prod_{u\in \ga} M(\deg(u), h_u,\be).
\end{align}
As in~\eqref{e:CH}, by the Chernoff bound, with high probablity, most of the $h_u$'s for $u\in \ga$ are large:
\begin{align*}
\P\pa{\abb{\{u\in \ga:|h_u|\ge h_0\}}\ge \fc \ell2} &\ge 
1-e^{-\pa{\rc 2 \log \pf{1/2}p + \rc 2 \log\pf{1/2}{1-p}}\ell}\\
&\ge 1-2p^{\rc 2\ell} 
\ge 1-\pf{e^{-c_2}}{\De}^\ell 
\end{align*}
Under this event, 
\begin{align*}
\prod_{u\in \ga} M(\De, h_u, \be) &\le \pf{e^{-2c_2}}{\De^2}^{\ell/2} = \fc{e^{-c_2\ell}}{\De^\ell}
\end{align*}
Hence, by breaking up~\eqref{e:avg-perc} into two bad events,
\begin{align*}
\ol P_p(\{\forall u\in \ga, S_u = 1 \})
&=\P\pa{\abb{\{u\in \ga:|h_u|\ge h_0\}}< \fc \ell2}\\
&\qquad + 
\ol P_p\pa{\{\forall u\in \ga, S_u = 1 \} \Big| \abb{\{u\in \ga:|h_u|\ge h_0\}}\ge \fc \ell2}\\
&
\le \fc{e^{-c_2\ell}}{\De^\ell} +  \fc{e^{-c_2\ell}}{\De^\ell}  = \fc{2e^{-c_2\ell}}{\De^\ell}.
\end{align*}
There are at most $\De^j$ paths of length $j$, so taking a union bound over all paths gives
\begin{align*}
\ol P_p(x\lra y)
&\le \sum_{j=\ell}^\iy \De^j \fc{2e^{-c_2j}}{\De^j} \le 4e^{-c_2\ell}.\qedhere
\end{align*}
\end{proof}

Given Lemma~\ref{l:6}, the proof of Theorem~\ref{t:cjn} is exactly the same as in~\cite{camia2018note}, after noting that the neighborhood of a vertex grows at most exponentially.

\begin{proof}[Proof of Theorem~\ref{t:cjn}]
By Lemma~\ref{l:6} applied to $c_2\mapsfrom 2c_2+\log(\De)$, 
\begin{align*}
\sum_{y\in V} e^{c_2d(x,y)} \ol P_p(x\lra y) & \le 
\sum_{y\in V} e^{c_2d(x,y)} 4 e^{-2c_2d(x,y)} \De^{-d(x,y)} <\iy,
\end{align*}
where we use the fact that the number of vertices at distance $\ell$ from $x$ is at most $\Delta^\ell$. 
Expanding $\ol P_p$ as an integral over $h$ and using the Fubini-Tonelli theorem,
\begin{align*}
\int_{\R^V} \sum_{y\in V} e^{c_2d(x,y)} P_p(x\lra y) \P(dh)&<\iy,
\end{align*}
which implies 
\begin{align*}
\sum_{y\in V}e^{c_2d(x,y)} P_p(x\lra y) &<\iy
\end{align*}
for almost all $h$, and 
\begin{align*}
e^{c_2d(x,y)} P_p(x\lra y) &<c_3(x,h) \text{ for all }y\in V
\end{align*}
for almost all $h$. 
Using Lemma~\ref{l:bound-by-perc}, we get 
\begin{align*}
\sup_{\eta, \xi \in \{\pm 1\}^{B}}
d_{TV}(p^{\tau\wedge \eta}_{\be, h}(\si_A\in \cdot ),p^{\tau\wedge \xi}_{\be, h}(\si_A\in \cdot))
&\le P_p (A\lra B)\\
&\le \sum_{x\in \pl A, y\in \pl B} P_p (x\lra y)\\
&\le \sum_{x\in \pl A, y\in \pl B} c_3(x,h)e^{-c_2d(x,y)}.\qedhere
\end{align*}
\end{proof}

\section*{Acknowledgements}

This work was undertaken as part of the Phase Transitions and Algorithms working group in the SAMSI Spring 2021 semester program on Combinatorial Probability.  We thank the semester organizers for bringing us together.

\bibliographystyle{plain}
\bibliography{bib}

\appendix

\section{SAW tree and recursion}
\label{a:saw}

In this section, we first describe how the Ising model on a graph $G$ translates to an Ising model on the SAW tree $T_v$ defined in Section~\ref{s:ssmsawtree}, and then show how to compute the marginal probabilities for $\si_v$ in $T_v$.

\paragraph{Ising model on the SAW tree.}
Given an Ising model on a graph $G=(V,E)$ with inverse temperature $\be$ and external fields $h$, we obtain an
Ising model on $T_{v}$ by taking the same inverse temperature $\be$ and taking the external field at
$(v_{i})_{i=0}^{k}$ to be $h_{v_{k}}$. If we are given boundary conditions $\tau$ on a set of vertices $\pl V$, then we take the boundary condition at 
$(v_{i})_{i=0}^{k}$ to be $\tau_{v_k}$ whenever $v_k\in \pl V$. 

Fix a lexicographic order on the vertices $V$; this induces an order on the edges incident to any fixed vertex.
For the vertices $\ga$ in $T_v\setminus \hat T_v$ (those representing paths with a cycle), we instead assign them the following boundary condition:
\begin{align*}
\tau_\ga &= \begin{cases}
1, &\text{if the edge closing the cycle is larger than the edge starting the cycle in }\ga\\
-1, &\text{otherwise.}
\end{cases}
\end{align*}

\paragraph{Marginal probabilities on the SAW tree.}
For simplicity of notation, let $p=p_{T_v, \be, h}$ denote the Ising model on the $T_v$.
For convenience, we work with the occupation ratio
\[
R_v := \frac{p(\sigma_v=-1)}{p(\sigma_v=+1)}.
\]
Recalling the definition of Gibbs measure, we have 
\[
R_v =e^{-2h_v}\frac{ p'(\sigma_v=-1)}{p'(\sigma_v = +1)} =e^{-2h_v} \prod_{i=1}^{\deg(v)} \frac{p_{u_i}'(\sigma_v =-1)}{p_{u_i}'(\sigma_v =+1)}, 
\]
where $p'$ is the Gibbs measure after removing $h_v$ (the external field at vertex $v$), and $u_i$ are the child vertices of $v$. 
The measure $p'_{u_i}$ is the Gibbs measure defined on the subtree by removing all other subtrees except the one rooted at 
vertex $u_i$. Note that this still includes the root vertex $v$. 
We define $p''_{u_i}$ as the Gibbs measure on the subtree rooted at $u_i$, excluding $v$. Then 
\[
\frac{p_{u_i}'(\sigma_v =-1)}{p'_{u_i}(\sigma=+1)} = \frac{e^{\beta}p''_{u_i}(\sigma_{u_i}=-1)+e^{-\beta}p''_{u_i}(\sigma_{u_i}=+1)}{e^{-\beta}p''_{u_i}(\sigma_{u_i}=-1)+e^{\beta}p''_{u_i}(\sigma_{u_i}=+1)} = \frac{e^{2\beta}R_{u_i}+1}{R_{u_i}+e^{2\beta}}.
\]
Using $R_v = \fc{p_v}{1-p_v}$ and $p_v = \rc{1+\rc{R_v}}$, we can write this in terms of $p_v :=p(\sigma_v = 1) $, 
\begin{align}
\label{e:recursion}
p_v  &=
\rc{1+\rc{e^{-2h_v} \prodo id \fc{e^{2\be}R_{u_i}+1}{R_{u_i}+e^{2\be}}}} = 
\frac{1}{1+ e^{2h_v}\prod_{i=1}^d \frac{e^{2\beta} 
(1-p_{u_i})
+ p_{u_i} 
}{
(1-p_{u_i})%e^{-2h_{u_i}}
+e^{2\beta}p_{u_i}}
}.
\end{align}
This equation provides a recursive method for computing marginal probabilities. Given boundary conditions $\tau$ on $\pl V$, we set $p_v=1$ or 0 according to whether $\tau_v=1$ or $\tau_v=-1$, and then work our way up to the root vertex.

\section{Algorithms}

\label{a:alg}

We explicitly write out the algorithms for approximate sampling (Algorithm~\ref{a:sampling}) and computation of $Z_{G,\be, h}$ (Algorithm~\ref{a:counting}). These algorithms work for both max-degree $\De$ graphs in Theorem~\ref{thm:main} and $\mathcal G(n,\De/n)$ graphs in Theorem~\ref{thm:main2}, in the appropriate regime and with the appropriate constants.
For the sampling algorithm (Algorithm~\ref{a:sampling}), we repeat the following: estimate the marginal probabilities for an unfixed vertex, use it to sample the spin for the vertex, and then add that value to the boundary conditions. 
For estimation of $Z_{G,\be, h}$ (Algorithm~\ref{a:counting}), to see that the product $e^{-H_{\be, h}(\si)} \prod_{i=1}^n r_i$ gives the right answer, note that if $p_{v_i}^*$ are the actual probabilities,
\begin{align*}
    r_i^* :&= \begin{cases}
    \rc{p_{v_i}^*}, & \si_i=1\\
    \rc{1-p_{v_i}^*}, & \si_i=-1
    \end{cases}
    \\
    &= p(\si_i' = \si_i | \si_j' = \si_j \text{ for }j<i) = 
    \fc{p(\si_j'=\si_j \text{ for } j\le i-1)}{p(\si_j'=\si_j \text{ for } j\le i)}.
\end{align*}
Then we have a telescoping product
\begin{align*}
    e^{-H_{\be, h}(\si)} \prod_{i=1}^n r_i^*
    &= 
    Z_{G,\be, h} \cdot p(\si) \prod_{i=1}^n \fc{p(\si_j'=\si_j \text{ for } j\le i-1)}{p(\si_j'=\si_j \text{ for } j\le i)} = Z_{G,\be, h}.
\end{align*}
With appropriate choice of $c$, we can ensure that for each $i$, with probability at least $1-\de/n$, that 
$p_{v_i}\in [p_{v_i}^* e^{-\fc{\ep}{2n}}, p_{v_i}^* e^{-\fc{\ep}{2n}}]$ and 
$r_i\in [r_i^* e^{-\ep/n}, r_i^* e^{\ep/n}]$. Then with probability at least $1-\de$, the estimate will be contained in $r_i\in [Z_{G,\be, h} e^{-\ep}, Z_{G,\be, h} e^{\ep}]$.

 \renewcommand{\algorithmicrequire}{\textbf{Input:}}
 \renewcommand{\algorithmicensure}{\textbf{Output:}}

 \begin{algorithm}[h!]
 \caption{Approximate sampling from RFIM} 
 \begin{algorithmic}[1]
 \Require Random field Ising model $(G, \be, h)$, failure probability $\de$, accuracy $\ep$.
 \Ensure Approximate sample
 \medskip  
 \State Order the vertices $v_1,\ldots, v_n$.
 \For{$i = 1 \to n$} 
 	\State Construct the SAW tree at $v_i$, $T_{v_i}$.
 	\State Set boundary conditions $\tau_w'=1$ in 
 	$T_{v_i}$ for all 
 	$w$ such that $d(v_i, w) > 
 	c\max\bc{\log\pf n\ep , \log \pf n\de}$ for an appropriately large constant $c$. 
 	Set $p_w=1$ for these $w$.
 	\Comment{Note that arbitrary boundary conditions can be chosen.}
 	\State Set boundary conditions corresponding to $\si_j, 1\le j<i$ in $T_{v_i}$.
 	\State Use recursion~\eqref{e:recursion} to compute $p_{v_i}$.
 	\State Set 
 	\begin{align*}
 \si_i &= \begin{cases}
 1,&\text{with probability }p_{v_i}\\
 -1,&\text{with probability }1-p_{v_i}.
 \end{cases}
 \end{align*}
 \EndFor
 \State \Return $(\si_1,\ldots, \si_n)$. 
 \end{algorithmic}
 \label{a:sampling}
 \end{algorithm}

 \begin{algorithm}[h!]
 \caption{Approximation of partition function for RFIM} 
 \begin{algorithmic}[1]
 \Require Random field Ising model on $G$, failure probability $\de$, accuracy $\ep$ (where desired multiplicative accuracy is $e^\ep$).
 \Ensure Approximation of partition function $Z_{G,\be,h}$
 \medskip  
 \State Order the vertices $v_1,\ldots, v_n$.
 \For{$i = 1 \to n$} 
 	\State Construct SAW tree at $v_i$,  $T_{v_i}$. 
 	\State Set boundary conditions $\tau_w'=1$ in $T_{v_i}$ for all 
 	$w$ such that $d(v_i, w) > 
 	c\max\bc{\log\pf n\ep , \log \pf n\de}$ for an appropriately large constant $c$. Set $p_w=1$ for these $w$.
 	\Comment{Note that arbitrary boundary conditions can be chosen.}
 	\State Set boundary conditions corresponding to $\si_j, 1\le j<i$ in $T_{v_i}$.
 	\State Use recursion~\eqref{e:recursion} to compute $p_{v_i}$.
 	\If{$p_{v_i}\ge \rc 2$}
 		\State Set $\si_i=1$ and $r_i = \fc{1}{p_{v_i}}$.
 	\Else
 		\State Set $\si_i=-1$ and $r_i = \fc{1}{1-p_{v_i}}$.
 	\EndIf
 	\State Include $\si_i$ as boundary condition in $G'$.
 \EndFor
 \State \Return $e^{-H_{\be, h}(\si)} \prod_{i=1}^n r_i$. 
 \end{algorithmic}
 \label{a:counting}
 \end{algorithm}

\end{document}